\documentclass[11pt]{article}

\usepackage{amsmath}
\usepackage{amssymb}

\usepackage{amsthm}
\usepackage{amscd}
\usepackage{amsfonts}
\usepackage{graphicx}
\usepackage{xcolor}
\usepackage{fancyhdr}
\usepackage{url}
\usepackage{comment}
\usepackage{hyperref}
\usepackage{tikz}
\usepackage{setspace}
\usepackage{cite}

\usetikzlibrary{automata,positioning}

\newtheorem{theorem}{Theorem}[section]
\newtheorem{lemma}[theorem]{Lemma}
\newtheorem{proposition}[theorem]{Proposition}
\newtheorem{definition}[theorem]{Definition}

\newcommand{\A}{\ensuremath{\mathop{\text{A}}}}
\newcommand{\E}{\ensuremath{\mathop{\text{E}}}}
\newcommand{\U}{\ensuremath{\mathbin{\text{U}}}}
\newcommand{\R}{\ensuremath{\mathbin{\text{R}}}}

\newcommand{\<}{\langle}
\renewcommand{\>}{\rangle}

\newcommand{\AAEx}{\ensuremath{\mathop{\text{AAE}}}}
\newcommand{\AEAx}{\ensuremath{\mathop{\text{AEA}}}}
\newcommand{\AEx}{\ensuremath{\mathop{\text{AE}}}}

\newcommand{\AAx}{\ensuremath{\mathop{\text{AA}}}}

\newcommand{\X}{\ensuremath{\mathop{\text{X}}}}
\newcommand{\G}{\ensuremath{\mathop{\text{G}}}}
\newcommand{\F}{\ensuremath{\mathop{\text{F}}}}

\newcommand{\Ak}{\mathop{\text{A}\hspace{-1pt}^k}}
\newcommand{\Ej}{\mathop{\text{E}\hspace{1pt}^j}}
\newcommand{\Ek}{\mathop{\text{E}\hspace{0pt}^k}}
\newcommand{\Aj}{\mathop{\text{A}\hspace{0pt}^j}}

\newcommand{\Atoms}{\ensuremath{\mathsf{Atoms}}}
\newcommand{\atom}[1]{\textsf{#1}}
\newcommand{\Compounds}{\ensuremath{\mathsf{Compounds}}}

\newcommand{\cl}{\mathit{cl}}
\newcommand{\ms}{\mathit{ms}}
\newcommand{\Buchi}{B\"uchi}
\newcommand{\proj}{\mathit{prj}}

\newcommand{\comp}{\ensuremath{\mathit{comp}}}

\newcommand{\zip}{\ensuremath{\mathit{zip}}}
\newcommand{\unzip}{\ensuremath{\mathit{unzip}}}

\newcommand{\Hp}{$\text{HyperLTL}_2$}

%
%
%

\pagenumbering{arabic}  

\usepackage{hp_defs}

\begin{document}

\title{A temporal logic of security}

\author{
Masoud Koleini \quad Michael R. Clarkson \\
Department of Computer Science \\
George Washington University
 \and  
Kristopher K. Micinski \\
Department of Computer Science \\
University of Maryland, College Park
}

\date{July 9, 2013}

\maketitle

\begin{abstract}
A new logic for verification of security policies is proposed.
The logic, HyperLTL, extends linear-time temporal logic (LTL) with connectives for explicit and simultaneous quantification over multiple execution paths, thereby enabling
HyperLTL to express information-flow security policies that LTL cannot.
A model-checking algorithm for a fragment of HyperLTL is given, and the algorithm is implemented in a prototype model checker.  
The class of security policies expressible in HyperLTL is characterized by an arithmetic hierarchy of hyperproperties.
\end{abstract}


\section{Introduction}
\label{sec:intro}

The theory of \emph{trace properties}, which characterizes correct behavior of programs in terms of properties of individual execution paths, developed out of an interest in proving the correctness of programs~\cite{Lamport77}.
Practical model-checking tools~\cite{NuSMV:2000:Cimatti,Holzmann:1997:SPIN,Hardin:1996:COSPAN,Lamport02} now enable automated verification of correctness.
Verification of security, unfortunately, isn't directly possible with such tools, because
some important security policies require sets of execution paths to model~\cite{McLean96}.
But there is reason to believe that similar verification methodologies could be developed for security:
\begin{itemize}
\item The \emph{self-composition} construction~\cite{BartheDR04,TerauchiA05} reduces properties of pairs of execution paths to properties of single execution paths, thereby enabling verification of a class of security policies.
\item The theory of \emph{hyperproperties}~\cite{ClarksonS10} generalizes the theory of trace properties to security policies, showing that certain classes of security policies are amenable to verification with invariance arguments~\cite{AlpernS87} and with stepwise refinement~\cite{Wirth71}.  
\end{itemize}
Prompted by these ideas, this paper develops an automated verification methodology for security.
In our methodology, security policies are expressed as logical formulas, and a model checker verifies those formulas.

We propose a new logic named \emph{HyperLTL}, which generalizes linear-time temporal logic (LTL)~\cite{Pnueli:1977:TLP}.
LTL implicitly quantifies over only a single execution path of a system, but HyperLTL allows explicit quantification over multiple execution paths simultaneously, as well as propositions that stipulate relationships among those paths.
For example, HyperLTL can express \emph{information-flow policies} such as ``for all execution paths $\pi_1$, there exists an execution path $\pi_2$, such that $\pi_1$ and $\pi_2$ always appear equivalent to observers who are not cleared to view secret information.''
Neither LTL nor branching-time logics (e.g., CTL~\cite{EmersonClarke:1982:CTL} and CTL$^*$~\cite{EmersonH86}) can directly express such policies, because they lack the capability to correlate multiple execution paths~\cite{AlurCZ06,McLean96}.  
Providing that capability is the key idea of HyperLTL.
The syntax that enables it is described in \S\ref{sec:syntax}, along with  several examples of information-flow policies.  
The semantics of HyperLTL is given in \S\ref{sec:semantics}.
It is based on a standard LTL semantics~\cite{Pnueli:1977:TLP} that models a formula with a single \emph{computation}, which is a propositional abstraction of an execution path.
Our HyperLTL semantics models a formula with a sequence of computations, making it possible to correlate multiple execution paths.

We also define a new model-checking algorithm for HyperLTL.  
Our algorithm uses a well-known LTL model-checking algorithm~\cite{Vardi:1994:infComp,Wolper00} based on {\Buchi} automata:
As input, that algorithm takes a formula $\phi$ to be verified and a system $S$ modeled as a {\Buchi} automaton $A_S$.  
The algorithm mechanically translates the formula to another {\Buchi} automaton $A_\phi$, then applies automata-theoretic constructions to $A_S$ and $A_\phi$.  
The output is either ``yes,'' the system satisfies the formula, or ``no,'' along with a counterexample path demonstrating that $\phi$ does not hold of $S$. 
In \S\ref{sec:modelchecking}, we upgrade that algorithm with a self-composition construction, so that it can verify formulas over multiple paths.
We obtain a model-checking algorithm that handles an important fragment of HyperLTL, including all of the examples in \S\ref{sec:examples}.
We implemented that algorithm in a prototype model-checker, which \S\ref{sec:modelchecking} describes.  

Hyperproperties can characterize the security policies expressible in HyperLTL.
The quantifiers appearing in a HyperLTL formula give rise to a hierarchy of hyperproperties, which we define in \S\ref{sec:hp}.  
The hierarchy contains 2-safety~\cite{TerauchiA05} and $k$-safety~\cite{ClarksonS10} hyperproperties as special cases.  
And it yields an \emph{arithmetic hierarchy} of hyperproperties that elegantly characterizes which hyperproperties can be verified by our model-checking algorithm.

This paper thus contributes to the theory of computer security by
\begin{itemize}
\item defining a new logic for expressing security polices,
\item showing that logic is expressive enough to formulate important in\-for\-ma\-tion-flow policies, 
\item giving an algorithm for model-checking a fragment of the logic, 
\item prototyping that algorithm and using it to verify security policies, and
\item characterizing the mathematical structure of security policies in terms of an arithmetic hierarchy of hyperproperties.
\end{itemize}
Though our results build upon the formal methods literature, our interest and application is entirely within the science of constructing systems that are provably secure.

We proceed as follows.
\S\ref{sec:syntax} defines the syntax of HyperLTL and provides several example formulations of information-flow policies.
\S\ref{sec:semantics} defines the semantics of HyperLTL.
\S\ref{sec:modelchecking} defines our model-checking algorithm.
\S\ref{sec:hp} discusses hyperproperties and HyperLTL.
\S\ref{sec:relatedwork} reviews related work.


\section{Syntax}
\label{sec:syntax}

HyperLTL extends propositional linear-time temporal logic (LTL)~\cite{Pnueli:1977:TLP} with explicit quantification over \emph{paths}, which are infinite sequences of execution states.
Formulas of HyperLTL are formed according to the following syntax:
\begin{eqnarray*}
\begin{aligned}
	&\phi&::=&\;\; \A\phi \mid \E\phi \mid \psi \\
	&\psi&::=&\;\;\; p\mid \neg\psi \mid \psi\vee\psi \mid \<\psi,\dots,\psi\> \mid \X\psi \mid \psi\U\psi
\end{aligned}
\end{eqnarray*}

A HyperLTL formula $\phi$ starts with a sequence of \emph{path quantifiers}.
$\A$ and $\E$ are universal and existential path quantifiers, respectively, read as ``along all paths'' and ``along some path.''
For example, $\AAEx\psi$ means that for all paths $\pi_1$ and $\pi_2$, there exists another path $\pi_3$, such that $\psi$ holds on those three paths.
(Since branching-time logics also have explicit path quantifiers, it is natural to wonder why we don't use one of them. We postpone addressing that question until~\S\ref{sec:HLTLvsOthers}.)
An atomic proposition $p$ expresses some fact about states.
The \emph{focus} connective, written $\<\psi_1,\dots,\psi_n\>$, is used to restrict attention to individual paths: $\psi_1$ must hold of the first path quantified over, $\psi_2$ of the second, and so forth.
Boolean connectives $\neg$ and $\vee$ have the usual classical meanings.
Implication, conjunction, and bi-implication are defined as syntactic sugar: $\psi_1\rightarrow\psi_2 = \neg\psi_1\vee\psi_2$, and $\psi_1\wedge\psi_2 = \neg(\neg\psi_1\vee\neg\psi_2)$, and $\psi_1 \leftrightarrow \psi_2 = \psi_1 \rightarrow \psi_2 \wedge \psi_2 \rightarrow \psi_1$.
True and false, written $\top$ and $\bot$, are defined as $p\vee\neg p$ and $\neg\top$, respectively.

Temporal connective \X{} is read as ``ne\underline{x}t.''
Formula $\X\psi$ means that $\psi$ holds on the next state of every quantified path.
Likewise, \U{} is read ``\underline{u}ntil,'' and $\psi_1\U\psi_2$ means that $\psi_2$ will eventually hold of all quantified paths, and until then $\psi_1$ holds.
The other standard temporal connectives \F{}, \G{} and \R{}, read as ``\underline{f}uture,'' ``\underline{g}lobally,'' and ``\underline{r}elease,'' are defined as syntactic sugar:
$\F\psi=\top\U\psi$, meaning in the future, $\psi$ must eventually hold; $\G\psi=\neg\F\neg\psi$, meaning $\psi$ must hold, globally; and $\psi_1\R\psi_2=\neg(\neg\psi_1\U\neg\psi_2)$, meaning $\psi_2$ must hold until released by $\psi_1$.
 
A HyperLTL formula $\phi$ is \emph{well-formed} iff
(i) $\phi$ contains at least one path quantifier, and
(ii) the length $n$ of all focus subformulas $\<\psi_1,\dots,\psi_n\>$ equals the number of path quantifiers at the beginning of $\phi$.

\subsection{Security policies in HyperLTL}
\label{sec:examples}

We now put HyperLTL into action by formulating several security policies.

\paragraph{Access control.} 
An \emph{access control} policy permits an operation $\mathit{op}$ on an object $o$ to proceed only if the subject $s$ requesting $\mathit{op}$ has the right to perform $\mathit{op}$ on $o$.
Let $\atom{permit}_{\mathit{op},o}$ be a proposition denoting that $\mathit{op}$ is permitted on $o$, and $\atom{req}_{s,\mathit{op},o}$ that $s$ has requested to perform $\mathit{op}$ on $o$, and $\atom{hasRight}_{s,\mathit{op},o}$ that $s$ has the right to perform $\mathit{op}$ on $o$.
Access control can be expressed in HyperLTL as follows:
\begin{equation}
\label{hp:ac}
\A\G (\atom{req}_{s,\mathit{op},o} \rightarrow (\atom{hasRight}_{s,\mathit{op},o} \leftrightarrow \atom{permit}_{\mathit{op},o})).
\end{equation}

\paragraph{Guaranteed service.}
If a system always eventually responds to a request for service, then it provides \emph{guaranteed service}:
\begin{equation}
\label{hp:gs}
\A\G (\atom{req} \rightarrow \F \atom{resp}).
\end{equation}
Both access control~\eqref{hp:ac} and guaranteed service~\eqref{hp:gs} are examples of trace properties expressible in LTL.
Any LTL property can be expressed in HyperLTL simply by prepending a universal path quantifier to its LTL formula.

\paragraph{Nonin(ter)ference.}
A system satisfies \emph{noninterference}~\cite{GoguenMeseguer:1982:NI}  when the outputs observed by low-security users are the same as they would be in the absence of inputs submitted by high-security users.
Noninterference thus requires a system to be closed under \emph{purging} of high-security inputs.
The original formulation~\cite{GoguenMeseguer:1982:NI} of noninterference uses an \emph{event-based} system model, in which an execution path is a sequence of individual events (e.g., commands), and purging removes high-security events from the sequence.
An alternative formulation~\cite{McLean:1994:GeneralTheory} called \emph{noninference} uses a \emph{state-based} system model, in which an execution path is a sequences of states (e.g., values of variables), and purging assigns an ``empty'' value, denoted $\lambda$, to the high-security component of the state.
Noninterference and noninference are both intended to be used with deterministic systems.

Here, we pursue the state-based model, because it blends well with temporal logic, which is also based on states.  
We note that for any event-based system, there is a state-based system equivalent to it~\cite{Millen94}, though an infinite number of states might be required.

Inputs, outputs, and users are classified into \emph{security levels} in the following examples.
For simplicity, we consider only two levels, \emph{high} and \emph{low}. 
We assume that each state contains input and output variables of each security level.

Let $\atom{high}$ hold in a state when its high inputs and outputs are not $\lambda$, and let $\atom{low-equiv}$ hold on a pair of states whenever those states have the same low inputs and outputs. 
Using those propositions, noninference can be expressed as follows:
\begin{equation}
\label{hp:ni}
\AEx\G(\<\top,\neg\atom{high}\>\wedge\atom{low-equiv}).
\end{equation}
The formula starts with $\AEx$, which means ``for all paths, there exists another path.''
Low equivalence of those paths is formulated as $\G\atom{low-equiv}$, which means that at each time step, the current states in the two paths are low equivalent.
Subformula $\<\top,\neg\atom{high}\>$ requires all states of the second path to have empty high inputs and outputs.
The second path is therefore the first path, but with its high inputs and outputs purged.

\paragraph{Nondeterminism and noninterference.}
Goguen and Meseguer's definition of noninterference~\cite{GoguenMeseguer:1982:NI} requires systems to be deterministic.
Nondeterminism is useful for specification of systems, however, so many variants of noninterference have been developed for nondeterministic systems.

A (nondeterministic) system satisfies \emph{observational determinism}~\cite{ZdancewicMyers:2003:OD} if every pair of executions with the same initial low observation remain indistinguishable by low users.
That is, the system appears to be deterministic to low users.
Systems that satisfy observational determinism are immune to \emph{refinement attacks}~\cite{ZdancewicMyers:2003:OD}, because observational determinism is preserved under refinement.
Observational determinism can be expressed as follows: 
\begin{equation}
\label{hp:od}
\AAx\atom{low-equiv}\rightarrow\G\atom{low-equiv}.
\end{equation}

There are many definitions of noninterference that do permit low-ob\-serv\-able nondeterminism.
\emph{Generalized noninterference} (GNI)~\cite{McCullough:1987:GNI}, for example, stipulates that the low-security outputs may not be altered by the injection of high-security inputs.
Like noninterference, GNI was original formulated for event-based systems, but it can also be formulated for state-based systems~\cite{McLean:1994:GeneralTheory}.  
GNI can be expressed as follows:
\begin{equation}
\label{hp:gni}
\AAEx\G(\atom{high-in-equiv}_{1,3}\wedge\atom{low-equiv}_{2,3}).
\end{equation}
Proposition $\atom{high-in-equiv}_{1,3}$ holds when the current states of the first and third paths have the same high inputs, and $\atom{low-equiv}_{2,3}$ holds when the current states in the second and third paths are low equivalent.
The third path is therefore an \emph{interleaving} of the high inputs of the first path and the low inputs and outputs of the second path.
Other security policies based on interleavings, such as \emph{restrictiveness}~\cite{McCullough:1990:Hookup} and \emph{separability}~\cite{McLean:1994:GeneralTheory}, can similarly be expressed in HyperLTL.


\subsection{Comparison with other temporal logics}
\label{sec:HLTLvsOthers}

Why did we invent a new temporal logic instead of using an existing, well-studied logic?
In short, because we don't know of an existing temporal logic that can directly express all the policies in \S\ref{sec:examples}:
\begin{itemize}
\item \textbf{Linear time.} 
LTL formulas express properties of individual execution paths.
But all of the noninterference properties of \S\ref{sec:examples} are properties of sets of execution paths~\cite{McLean:1994:GeneralTheory,ClarksonS10}, hence cannot be formulated in LTL.
Explicit path quantification does enable their formulation in HyperLTL.
\item \textbf{Branching time.} 
CTL~\cite{EmersonClarke:1982:CTL} and CTL$^*$~\cite{EmersonH86} have explicit path quantifiers.
But their quantifiers don't enable expression of relationships between paths, because only one path is ``in scope'' at a given place in a formula.
(See appendix~\ref{sec:whynotctl} for an example.)
So they can't directly express policies such as observational determinism~\eqref{hp:od} and GNI~\eqref{hp:gni}.
HyperLTL does allow many paths to be in scope, as well as propositions over all those paths.
\par
By using the self-composition construction, it is possible to express relational noninterference in CTL~\cite{BartheDR04} and observational determinism in CTL$^*$~\cite{HuismanWS06}.
Those approaches resemble HyperLTL, but HyperLTL formulas express policies directly over the original system, rather than over a self-composed system.
\par
Furthermore, the self-composition approach does not seem capable of expressing policies, such as noninference~\eqref{hp:ni} and generalized noninterference~\eqref{hp:gni}, that have both universal and existential path quantifiers over infinite paths.
(A recent upgrade of self-composition, \emph{asymmetric product programs}~\cite{BartheCK13}, does enable verification of refinement properties involving both kinds of quantifiers.
It might be possible to express policies like noninference with that upgrade.)
Nonetheless, it is straightforward to express such policies in HyperLTL.
\item \textbf{Modal $\mu$-calculus.}
Modal $\mu$-calculus\cite{Kozen:1982:mu-calc} generalizes CTL$^*$.
But as expressive as modal $\mu$-calculus is, it remains insufficient~\cite{AlurCZ06} to express all \emph{opacity} policies~\cite{BryansKMR05}, which prohibit observers from discerning the truth of a predicate.
(Alur et al.~\cite{AlurCZ06} actually write ``secrecy'' rather than ``opacity.'')
Simplifying definitions slightly, a trace property $P$ is \emph{opaque} iff for all paths $\pi$ of a system, there exists another path $\pi'$ of that system, such that $\pi$ and $\pi'$ are low-equivalent, and exactly one of $\pi$ and $\pi'$ satisfies $P$.
HyperLTL is able to express all opacity policies over linear-time properties:
given LTL formula $\phi_P$ that expresses a linear-time trace property $P$, HyperLTL formula $$\AEx ((\G \atom{low-equiv}) \wedge (\<\phi_P, \neg \phi_P\> \vee \<\neg\phi_P, \phi_P\>))$$ stipulates that $P$ is opaque.
Noninference~\eqref{hp:ni}, for example, is a linear-time opacity policy~\cite{PeacockR06}.
\end{itemize}


\section{Semantics}
\label{sec:semantics}

HyperLTL formulas are interpreted with respect to \emph{computations}.
A computation abstracts away from the states in a path, representing each state by the propositions that hold of that state.
Let $\Atoms$ denote the set of atomic propositions. 
Formally, a computation $\gamma$ is an infinite sequence over $\powerset(\Atoms)$, where $\powerset$ denotes the powerset operator.
Define $\gamma[i]$ to be element $i$ of computation $\gamma$.  
Hence, $\gamma[i]$ is a set of propositions.
And define $\gamma[i..]$ to be the suffix of $\gamma$ starting with element $i$---that is, the sequence $\gamma[i]\gamma[i+1]\ldots$
We index sequences starting at 1, so $\gamma[1..] = \gamma$.

A computation represents a single path, but HyperLTL formulas may quantify over multiple paths.
To represent that, let $\Gamma$ denote a finite tuple $(\gamma_1,\dots,\gamma_k)$ of computations.
Define $|\Gamma|$ to be the length $k$ of $\Gamma$,
and define projection $\proj_i(\Gamma)$ to be element $\gamma_i$.
Given a tuple $\Gamma$ define $\Gamma\cdot\gamma$ to be the concatenation of element $\gamma$ to the end of tuple $\Gamma$, yielding tuple $(\gamma_1,\dots,\gamma_k,\gamma)$.
Extend that notation to concatenation of tuples by defining  $\Gamma\cdot\Gamma'$ to be the tuple containing all the elements of $\Gamma$ followed by all the elements of $\Gamma'$.
Extend notations $\gamma[i]$ and $\gamma[i..]$ to apply to computation tuples by defining $\Gamma[i] = (\gamma_1[i],\dots,\gamma_k[i])$---that is, the tuple containing element $i$ from each computation in $\Gamma$---and $\Gamma[i..] = (\gamma_1[i..],\dots,\gamma_k[i..])$.

HyperLTL formulas may involve propositions over multiple states.  
For example, $\atom{low-equiv}$ in the definition of noninference~\eqref{hp:ni} holds when two states have the same low inputs and outputs. 
We therefore need a means to determine what \emph{compound} propositions hold of a tuple of states, given what atomic propositions hold of the individual states.
To do that, we introduce \emph{bonding} functions that describe how to produce compound propositions out of tuples of atomic propositions.
Let $\Compounds$ denote the set of compound propositions, and assume that $\Atoms \subseteq \Compounds$. 
Let $B$ be a family $\setdef{B_i}{i\in\mathbb{N}}$ of functions, such that each $B_i$ is a function from $\mathcal{P}(\Atoms)^i$ to $\mathcal{P}(\Compounds)$.
Notation $X^n$ is the $n$-ary cartesian power of set $X$.
We require $B_1$ to be the identity function, so that length-1 tuples are not changed by bonding.
As an example, consider a bonding function $B_2$ that describes when two states are low-equivalent.
Given a set $\setdef{\atom{low}_i}{1 \leq i \leq n}$ of atoms, describing $n$ different low states, we could define $B_2$ such that $B_2(\{\atom{low}_i\},\{\atom{low}_i\}) = \{\atom{low-equiv}\}$, and $B_2(\{\atom{low}_i\},\{\atom{low}_j\}) = \emptyset$ if $i \neq j$.
Given a tuple, it is always clear from the length of the tuple which function $B_i$ should be applied to it, so henceforth we omit the subscript.

The validity judgment for HyperLTL formulas is written $\Gamma \models \phi$.  
Formula $\phi$ must be well-formed.
The judgment implicitly uses a \emph{model} $M$, which is a set of computations, and a family $B$ of bonding functions.
We omit notating $M$ and $B$ as part of the judgment, because they do not vary during the interpretation of a formula.  
Validity is defined as follows:
\begin{enumerate}
\item\label{validity:A} $\Gamma \models \A\psi$ iff for all $\gamma\in M : \Gamma\cdot\gamma\models\psi$ 
\item\label{validity:E} $\Gamma \models \E\psi$ iff there exists $\gamma\in M : \Gamma\cdot\gamma\models\psi$
\item\label{validity:p} $\Gamma \models p$ iff $p \in B(\Gamma[1])$
\item\label{validity:not} $\Gamma \models \neg \psi$ iff $\Gamma\not\models \psi$ 
\item\label{validity:or} $\Gamma \models \psi_1\vee\psi_2$ iff $\Gamma \models\psi_1$ or $\Gamma\models \psi_2$
\item\label{validity:focus} $\Gamma \models \<\psi_1,\dots,\psi_n\>$ iff for all $i$ : if $1 \leq i \leq n$ then $\proj_i(\Gamma)\models\psi_i$ 
\item\label{validity:X} $\Gamma \models \X\psi$ iff $\Gamma[2..]\models\psi$
\item\label{validity:U} $\Gamma \models \psi_1\U\psi_2$ iff there exists $k : k \geq 1$ and $\Gamma[k..]\models\psi_2$ and for all $j$ : if $1 \leq j < k$ then $\Gamma[j..] \models\psi_1$
\end{enumerate}
Clauses \ref{validity:A} and \ref{validity:E} quantify over a computation $\gamma$ from $M$, and they concatenate $\gamma$ to $\Gamma$ to evaluate subformula $\psi$.
Clause \ref{validity:p} means satisfaction of atomic propositions is determined by the first element of each computation in $\Gamma$.
Clauses \ref{validity:not} and \ref{validity:or} are standard.
In clause \ref{validity:focus}, elements of a focus formula are independently evaluated over their corresponding individual computations.
Clauses \ref{validity:X} and \ref{validity:U} are the standard LTL definitions of $\X$ and $\U$, upgraded to work over a sequence of computations.


\section{Model Checking}
\label{sec:modelchecking}


Model-checking is possible at least for fragments of HyperLTL.
For example, HyperLTL contains LTL as a fragment, and LTL enjoys a decidable model-checking algorithm.
Here's a much larger fragment of HyperLTL that can be model checked:
\begin{itemize}
\item The series of quantifiers at the beginning of a formula may involve only a single alternation of quantifiers.
For example, $\E\psi$ and $\AAEx\psi$ are allowed, but $\AEAx\psi$ is not. 
\item In focus formulas $\<\psi_1,\dots,\psi_n\>$, the subformulas $\psi_i$ may not use temporal connectives $\X$ and $\U$.  Hence all the $\psi_i$ must be propositional formulas.
\end{itemize}
We name this fragment \emph{\Hp}, because every formula in it may begin with at most two kinds of quantifiers---a sequence of \A's followed by a sequence of \E's, or vice-versa.
{\Hp} is an important fragment, because it is expressive enough for all the security policies formulated in \S\ref{sec:examples}.

We now give a model-checking algorithm for {\Hp}.  
Our algorithm adapts previously known algorithms for LTL model-checking~\cite{GerthVardi:1995:OntheFlyLTL,Vardi:2007:automata-theoretic-rev,Vardi:1996:LTL,Gastin:2001:FastLTLtoBuchi}.
Those LTL algorithms determine whether a set $M$ of computations satisfies an LTL formula $\phi$, as follows:
\begin{enumerate}
\item Represent $M$ as a {\Buchi} automaton~\cite{Buchi62}, $A_M$.
Its language is $M$.
\item Construct {\Buchi} automaton $A_{\neg\phi}$, whose language is the set of all computations that don't satisfy $\phi$.
\item Intersect $A_M$ and $A_{\neg\phi}$, yielding automaton $A_M \cap A_{\neg\phi}$.  Its language contains all computations of $M$ that don't satisfy $\phi$.
\item Check whether the language of $A_M \cap A_{\neg\phi}$ is empty. 
If so, all computations of $M$ satisfy $\phi$, hence $M$ satisfies $\phi$.
If not, then any element of the language is a counterexample showing that $M$ doesn't satisfy $\phi$.
\end{enumerate}

Our algorithm for model-checking {\Hp} adapts that LTL algorithm.
Without loss of generality, assume that the {\Hp} formula to be verified has the form $\Ak \Ej \psi$, where $\Ak$ and $\Ej$ denote sequences of universal and existential path quantifiers of lengths $k$ and $j$.
(Formulas of the form $\Ek\Aj\psi$ can be verified by rewriting them as $\Ak\Ej\neg\psi$.)
Let $n$ equal $k+j$.
Semantically, a model of $\psi$ must be an $n$-tuple of computations.
Let $\zip$ denote the usual function that maps an $n$-tuple of sequences to a single sequence of $n$-tuples---for example, $\zip([1,2,3],[4,5,6]) = [(1,4), (2,5), (3,6)]$---and let $\unzip$ denote its inverse.
To determine whether a system $M$ satisfies {\Hp} formula $\Ak \Ej \psi$, our algorithm follows the same basic steps as the LTL algorithm:
\begin{enumerate}
\item Represent $M$ as a {\Buchi} automaton, $A_M$. 
Construct the $n$-fold product of $A_M$ with itself---that is, $A_M \times A_M \times \cdots \times A_M$, where ``$A_M$'' occurs $n$ times.
This construction is straightforward and formalized in appendix~\ref{sec:constructions}. 
Denote the resulting automaton as $A^n_M$.
If $\gamma_1,\ldots\gamma_n$ are all computations of $M$, then $\zip(\gamma_1,\ldots\gamma_n)$ is a word in the language of $A^n_M$.%

\item Construct {\Buchi} automaton ${A_{\psi}}$.
Its language is the set of all words $w$ such that $\unzip(w)=\Gamma$ and $\Gamma \models \psi$---that is, the tuples $\Gamma$ of computations that satisfy $\psi$.
This construction, formalized in appendix~\ref{sec:constructions}, is a generalization of the corresponding LTL construction.

\item Intersect $A^n_M$ and ${A_{\psi}}$, yielding automaton ${A^n_M\cap A_{\psi}}$.  
Its language is essentially the tuples of computations of $M$ that satisfy $\psi$.  
This construction is standard~\cite{Clarke:1999:model-checking}. 

\item Check whether $\mathcal{L}({((A^n_M\cap A_{\psi})|_k)^C\cap A^k_M})$ is empty, where
(i) $A^C$ denotes the \emph{complement} of an automaton $A$,
(complement constructions are well-known---e.g.,~\cite{Vardi:1996:LTL}---so we do not formalize one here), 
and (ii) $A|_k$ denotes the same automaton as $A$, but with every transition label (which is an $n$-tuple of propositions) \emph{projected} to only its first $k$ elements.
That is, if $\mathcal{L}(A)$ contains words of the form $\zip(\gamma_1,\ldots\gamma_n)$, then $\mathcal{L}(A|_k)$ contains  words of the form $\zip(\gamma_1,\ldots\gamma_k)$.
Projection erases the final $j$ computations from each letter of a word, leaving only the initial $k$ computations.
Thus a word is in the projected language iff there exists some extension of the word in the original language. 

If $\mathcal{L}({((A^n_M\cap A_{\psi})|_k)^C\cap A^k_M})$ is empty, then $M$ satisfies $\Ak \Ej \psi$.
If not, then any element of the language is a counterexample showing that $M$ doesn't satisfy $\Ak \Ej \psi$.

\end{enumerate}

The final step of the above algorithm is a significant departure from the LTL algorithm.  
Intuitively, it works because projection introduces an existential quantifier, thus enabling verification of formulas with a quantifier alternation.
The following theorem states the correctness of our algorithm:
\begin{theorem}\label{thm:modelcheckingcorrect}
Let $\phi$ be {\Hp} formula $\Ak\Ej\psi$, and let $n = k+j$.  
Let $M$ be a set of computations.
Then $\phi$ holds of $M$ iff $\mathcal{L}({((A^n_M\cap A_{\psi})|_k)^C\cap A^k_M})$ is empty.
\end{theorem}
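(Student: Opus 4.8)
The plan is to unfold both sides of the claimed equivalence down to statements about tuples of computations, and then to chase the chain of automata constructions step by step. First I would fix notation: write $\Ak\Ej\psi$ with $k$ universal and $j$ existential quantifiers, $n=k+j$. By the semantics (clauses~\ref{validity:A} and~\ref{validity:E}), $\phi$ holds of $M$ iff for all $\gamma_1,\dots,\gamma_k\in M$ there exist $\gamma_{k+1},\dots,\gamma_n\in M$ with $(\gamma_1,\dots,\gamma_n)\models\psi$. Equivalently, by contraposition, $\phi$ \emph{fails} of $M$ iff there exist $\gamma_1,\dots,\gamma_k\in M$ such that for all $\gamma_{k+1},\dots,\gamma_n\in M$ we have $(\gamma_1,\dots,\gamma_n)\not\models\psi$. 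So it suffices to show that the language $\mathcal{L}(((A^n_M\cap A_\psi)|_k)^C\cap A^k_M)$ is nonempty iff such a ``bad prefix'' $(\gamma_1,\dots,\gamma_k)$ exists; indeed, I would aim to show that this language is \emph{exactly} $\{\zip(\gamma_1,\dots,\gamma_k) : \gamma_i\in M,\ \forall \gamma_{k+1},\dots,\gamma_n\in M.\ (\gamma_1,\dots,\gamma_n)\not\models\psi\}$, which immediately gives both the emptiness equivalence and the counterexample claim.

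The core is then a sequence of four language identities, one per construction, each cited or formalized in the appendices. (1) By the product construction (appendix~\ref{sec:constructions}), $\mathcal{L}(A^n_M)=\{\zip(\gamma_1,\dots,\gamma_n):\gamma_i\in M\}$, and likewise $\mathcal{L}(A^k_M)=\{\zip(\gamma_1,\dots,\gamma_k):\gamma_i\in M\}$; here I would lean on the fact that $\mathcal{L}(A_M)=M$. (2) By the construction of $A_\psi$ (appendix~\ref{sec:constructions}), $\mathcal{L}(A_\psi)=\{w:\unzip(w)\models\psi\}$, where I should note $w$ ranges over sequences of $n$-tuples of proposition-sets; this is where the bonding functions enter, since satisfaction of compound propositions in $\psi$ is mediated by $B$, and the $A_\psi$ construction must encode exactly the validity relation $\models$ restricted to $\psi$. (3) By the standard intersection construction, $\mathcal{L}(A^n_M\cap A_\psi)=\{\zip(\gamma_1,\dots,\gamma_n):\gamma_i\in M,\ (\gamma_1,\dots,\gamma_n)\models\psi\}$. (4) By the definition of projection $|_k$ on transition labels, $\mathcal{L}((A^n_M\cap A_\psi)|_k)=\{\zip(\gamma_1,\dots,\gamma_k):\exists \gamma_{k+1},\dots,\gamma_n\in M.\ (\gamma_1,\dots,\gamma_n)\models\psi\}$ — this is the crucial step where projection realizes the existential quantifier, and I would verify it by observing that projecting each letter of an accepting run preserves the run structure, so a $k$-truncated word is accepted iff some $n$-length extension was.

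Finally I would take the complement and intersect with $A^k_M$. Complementation (well-known for \Buchi{} automata, e.g.~\cite{Vardi:1996:LTL}) gives $\mathcal{L}(((A^n_M\cap A_\psi)|_k)^C)$ equal to the set of \emph{all} words over the projected alphabet that are \emph{not} of the form $\zip(\gamma_1,\dots,\gamma_k)$ with an accepting $n$-extension — but this complement is taken relative to \emph{all} words over the $k$-tuple alphabet, not just zips of $M$-computations, which is why the subsequent intersection with $A^k_M$ is needed to cut it down to words of the form $\zip(\gamma_1,\dots,\gamma_k)$ with each $\gamma_i\in M$. After that intersection we get precisely $\{\zip(\gamma_1,\dots,\gamma_k):\gamma_i\in M,\ \neg\exists \gamma_{k+1},\dots,\gamma_n\in M.\ (\gamma_1,\dots,\gamma_n)\models\psi\}$, i.e.\ the bad prefixes, completing the argument. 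I expect the main obstacle to be step (4): pinning down precisely that label-projection on the automaton corresponds to existential projection on languages of zipped tuples, and in particular handling the subtlety that a projected word may have \emph{several} preimages, of which only one need be accepted — together with the bookkeeping in the complementation step that the complement is over the full alphabet and hence the re-intersection with $A^k_M$ is essential for both directions of the equivalence. The $A_\psi$ construction (step 2) is a secondary source of work, since its correctness — that it faithfully captures $\models$ on the propositional-plus-temporal structure of $\psi$ including the focus connective and the bonding functions — is a generalization of the classical LTL-to-\Buchi{} translation and must be established in the appendix and invoked here.
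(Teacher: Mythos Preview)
Your proposal is correct and follows essentially the same route as the paper's proof: both arguments reduce to showing that $\mathcal{L}((A^n_M\cap A_\psi)|_k)$ consists exactly of the $\zip(\Gamma_k)$ that admit an extension in $M$ satisfying $\psi$, so that complementing and intersecting with $A^k_M$ isolates precisely the countermodels. The paper's proof is more narrative (framing the search for a countermodel via the set-theoretic identity $X\subset Y$ iff $X^C\cap Y\neq\emptyset$), whereas you lay out the four language identities explicitly, but the substance is the same.
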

\begin{proof}
($\Rightarrow$, by contrapositive)
We seek a countermodel showing that $\Ak\Ej\psi$ doesn't hold of $M$.
For that countermodel to exist, 
\begin{equation}\label{eq:mc1}
\begin{split}
&\text{there must exist a $k$-tuple $\Gamma_k$} : \\ &\text{for all $j$-tuples $\Gamma_j$ : if $\mathit{set}(\Gamma_k \cdot \Gamma_j) \subseteq M$ then $\Gamma_k \cdot \Gamma_j \models \neg\psi$,}
\end{split}
\end{equation}
where $\mathit{set}(\Gamma)$ denotes the set containing the same elements as tuple $\Gamma$.
To find that countermodel $\Gamma_k$, consider $\mathcal{L}(A^n_M\cap A_{\psi})$.
If that language is empty, then 
\begin{equation}\label{eq:mc2}
\begin{split}
&\text{for all $k$-tuples $\Gamma_k$ and} \\ &\text{for all $j$-tuples $\Gamma_j$ : }\text{if $\mathit{set}(\Gamma_k \cdot \Gamma_j) \subseteq M$ then $\Gamma_k \cdot \Gamma_j \models \neg\psi$.}
\end{split}
\end{equation}
That's almost what we want, except that $\Gamma_k$ is universally quantified in~\eqref{eq:mc2} rather than existentially quantified as in~\eqref{eq:mc1}.
So we introduce projection and complementation to relax the universal quantification to existential.
First, note that language $\mathcal{L}((A^n_M\cap A_{\psi})|_k)$ contains all $\zip(\Gamma_k)$ for which there exists a $\Gamma_j$ such that $\mathit{set}(\Gamma_k \cdot \Gamma_j) \subseteq M$ and $\Gamma_k \cdot \Gamma_j \models \psi$.
So if there exists a $\Gamma_k^*$ such that $\zip(\Gamma_k^*) \not\in \mathcal{L}((A^n_M\cap A_{\psi})|_k)$, then for all  $\Gamma_j$, if $\mathit{set}(\Gamma_k \cdot \Gamma_j) \subseteq M$ then $\Gamma_k \cdot \Gamma_j \models \neg\psi$.
That $\Gamma_k^*$ would be exactly the countermodel we seek according to~\eqref{eq:mc1}.
To find such a $\Gamma_k^*$, it suffices to determine whether $\mathcal{L}((A^n_M\cap A_{\psi})|_k) \subset \mathcal{L}(A^k_M)$, because any element that strictly separates those sets would satisfy the requirements to be a $\Gamma_k^*$.
By simple set theory, $X \subset Y$ iff $X^C \cap Y$ is not empty.
Therefore, if $\mathcal{L}({((A^n_M\cap A_{\psi})|_k)^C\cap A^k_M})$ is not empty, then a countermodel $\Gamma_k^*$ exists.

($\Leftarrow$) The same argument suffices: if $\mathcal{L}({((A^n_M\cap A_{\psi})|_k)^C\cap A^k_M})$ is empty, then no countermodel can exist.
\end{proof}

We are currently investigating the complexity of this model-checking algorithm.

\paragraph*{Formulas without quantifier alternation.} 
Define HyperLTL$_1$ to be the fragment of {\Hp} that contains formulas with no alternation of quantifiers.
HyperLTL$_1$ can be verified more efficiently than {\Hp}.
Suppose $\phi$ is HyperLTL$_1$ formula $\A^n\psi$. 
Then it suffices to check whether $A^n_M \cap A_{\neg\psi}$ is non-empty.
This is essentially the self-composition construction, as used in previous work~\cite{BartheDR04,TerauchiA05,ClarksonS10}.

\paragraph*{Prototype.}
We implemented a prototype for the model-checking algorithm in OCaml. The prototype accepts an input file for the state transition system description, and a \Hp{} formula. For the prototype, the description language of the state transition system requires explicit definition of the states, single-state and multistate labels. For automata complementation, the prototype uses GOAL~\cite{GOAL:2007}, an interactive tool for manipulating \Buchi{} automata. In the case that a \Hp{} property doesn't hold, a witness will be produced.


\section{Hyperproperties}
\label{sec:hp}

The mathematical structure of the class of security policies expressible in HyperLTL can be precisely characterized by hyperproperties. 
We begin by summarizing the theory of hyperproperties.

\begin{definition}[Hyperproperties~\cite{ClarksonS10}]
A \emph{trace} is a finite or infinite sequence of states.  
(The terms ``infinite trace'' and ``path'' are therefore synonymous.)
Define $\trfin$ to be the set of finite traces and $\trinf$ to be the set of infinite traces.
A \emph{trace property} is a set of infinite traces. 
A set $T$ of traces satisfies a trace property $P$ iff $T \subseteq P$.
A \emph{hyperproperty} is a set of sets of infinite traces, or equivalently a set of trace properties. 
The interpretation of a hyperproperty as a security policy is that the hyperproperty is the set of systems allowed by that policy.
Each trace property in a hyperproperty is an allowed system, specifying exactly which executions must be possible for that system. 
Thus a set $T$ of traces satisfies hyperproperty $\hp{H}$ iff $T$ is in $\hp{H}$.
Given a trace property $P$, the powerset of $P$ is the unique hyperproperty that expresses the same policy as $P$.  Denote that hyperproperty as $\lift{P}$.
\end{definition}

\subsection{k-hyperproperties}
\label{sec:khp}

A system satisfies a trace property if every trace of the system satisfies the property.  
To determine whether a trace satisfies the property, the trace can be considered in isolation, without regard for any other traces that might or might not belong to the system.
Similarly, a system satisfies observational determinism if every pair of its traces---where every pair can be considered in isolation---satisfies HyperLTL formula~\eqref{hp:od}.
These examples suggest a new class of hyperproperties based on the idea of satisfaction determined by bounded sets of traces.

Let a \emph{$k$-hyperproperty} be a hyperproperty that is definable by a $k$-ary relation on traces as follows.
Intuitively, one needs to consider at most $k$ traces at a time to decide whether a system satisfies a $k$-hyperproperty.
Formally, a hyperproperty \hp{H} is a $k$-hyperproperty iff 
\begin{multline*}
\existsqer{R \subseteq \trinf^k}{\forallqer{S \in \hp{H}}{\\ \forallq{\vec{t} \in \trinf^k}{
\mathit{set}(\vec{t}\;) \subseteq S}{\vec{t} \in R}}}, 
\end{multline*}
where $\trinf^k$ denotes the $k$-fold Cartesian product of $\trinf$ (i.e., the set of all $k$-tuples of infinite traces), $\vec{t}$ denotes a $k$-tuple $(t_1,\ldots,t_k)$ of traces, and $\mathit{set}(\vec{t}\;)$ denotes $\setdef{t_i}{1 \leq i \leq k}$.
For a system $S$ to be a member of $\hp{H}$, all $k$-tuples of traces from $S$ must satisfy $R$, in which case relation $R$ \emph{defines} \hp{H}.

Trace properties are 1-hyperproperties:  to decide whether a system $S$ satisfies a 1-hyperproperty, it suffices to consider each trace of $S$ in isolation. 
For a trace property $P$, the relation that defines hyperproperty $[P]$ is $P$ itself, because 
\begin{equation*}
{\forallqer{S \in [P]}{\forallq{\vec{t} \in \trinf^1}{\mathit{set}(\vec{t}\;) \subseteq S}{\vec{t} \in P}}}.
\end{equation*}
 
The $k$-hyperproperties form a hierarchy in which each level requires consideration of one more trace than the previous level.
Formally, any $k$-hyperproperty defined by $R$ is also a $(k+1)$-hyperproperty defined by relation $\setdef{\vec{t}\cdot u}{\vec{t} \in R \andsp u \in \trinf}$, where $\cdot$ denotes appending an element to a tuple---that is, $(t_1, \ldots, t_k)\cdot u = (t_1, \ldots, t_k, u)$.
So all 1-hyperproperties are also 2-hyperproperties, etc.  

Observational determinism is a 2-hyperproperty, because it suffices to consider pairs of traces to decide whether a system satisfies it. 
The HyperLTL formula~\eqref{hp:od} that characterizes it makes this apparent:
\begin{itemize}
\item The two quantifiers at the beginning of the formula, $\AAx$, show that the policy is defined in terms of pairs of traces.
\item The subformula following the quantifiers, $\atom{low-equiv} \Rightarrow \G \atom{low-equiv}$  gives the relation that defines the policy as a 2-hyperproperty.  
That relation is the set of all pairs $(t_1,t_2)$ of traces such that $$\comp(t_1),\comp(t_2) \models \atom{low-equiv} \Rightarrow \G \atom{low-equiv}.$$
\end{itemize}

Noninference, however, is not a 2-hyperproperty.  
Though it can be defined as a relation on pairs of traces, one of those traces is existentially quantified;
$k$-hyperproperties allow only universal quantification.
That suggests the following generalization of $k$-hyperproperties.

\subsection{Q-hyperproperties}

Let $Q$ be a finite sequence of universal and existential quantifiers---for example, $\forall\exists$.
Define hyperproperty \hp{H} to be an \emph{$Q$-hyperproperty} iff $|Q| = k$ and 
\begin{equation*}
\existsqer{R \subseteq \trinf^k}{\forallqer{S \in \hp{H}}
{\quant{Q}{\vec{t} \in \trinf^k}
{\vec{t} \subseteq S}{\vec{t} \in R}{:}{:}}}.
\end{equation*}
Notation $Q \, \vec{t} \in \trinf^k$ is an abbreviation for $k$ nested quantifications:
\begin{equation*}
Q_1 \, t_1 \in \trinf \;\! : \;\! Q_2 \, t_2 \in \trinf \;\! : \;\! \ldots \;\! : \;\! Q_k \, t_k \in \trinf,
\end{equation*}
where $Q_i$ denotes quantifier $i$ from sequence $Q$.
For example, $\forall\exists \, \vec{t} \in \trinf^k$ abbreviates $\forall \, t_1 \in \trinf \;\! : \;\! \exists \, t_2 \in \trinf$.

Noninference is a $\forall\exists$-hyperproperty.
The HyperLTL formula~\eqref{hp:ni} that characterizes it makes this apparent.
The two quantifiers show that the policy is defined in terms of pairs of traces. 
Its defining relation is the set of all pairs $(t_1,t_2)$ of traces such that $$\comp(t_1),\comp(t_2) \models \G(\<\top,\neg\atom{high-in}\>\wedge\atom{low-equiv}).$$
Likewise, separability and generalized noninterference are both $\forall\forall\exists$-hyper\-properties, and restrictiveness is the intersection of two $\forall\forall\exists$-hyperproperties. 

The $Q$-hyperproperties strictly generalize the $k$-hyper\-prop\-er\-ties, because (i) for all $k$, a $k$-hyperproperty is a $\forall^k$-hyper\-prop\-er\-ty, where $\forall^k$ denotes a sequence of $k$ universal quantifiers, and because (ii) no $Q$-hyperproperty, such that $Q$ contains $\exists$, is a $k$-hyperproperty.
As do the $k$-hyperproperties, the $Q$-hyperproperties form a hierarchy:  any $Q$-hyper\-prop\-er\-ty is also a $Q'$-hyperproperty if sequence $Q$ is a prefix of sequence $Q'$.

A $Q$-hyperproperty is \emph{linear-time} if its defining relation $R$ is linear-time, meaning that it can be expressed with the linear-time temporal connectives, $\X$ and $\U$. (Or, equivalently~\cite{GabbayPSS80}, that it can be expressed in S1S, the monadic second-order theory of one successor.)
\begin{proposition}
$\hp{H}$ is a linear-time $Q$-hyperproperty iff there exists a HyperLTL formula $\phi$, such that $S \in \hp{H}$ iff $S \models \phi$.
\end{proposition}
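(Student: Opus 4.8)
The plan is to establish a structure‑preserving dictionary between HyperLTL formulas and linear‑time $Q$‑hyperproperties and then read the biconditional off it in each direction. (I read the definition of $Q$‑hyperproperty in the strong sense the surrounding text clearly intends: $\hp{H}$ is a $Q$‑hyperproperty defined by $R\subseteq\trinf^k$, with $k=|Q|\geq 1$, exactly when $\hp{H}=\{\,S : Q\,\vec{t}\in\trinf^k : \vec{t}\subseteq S : \vec{t}\in R\,\}$; and a set $S$ of traces is identified with the model $\{\comp(t):t\in S\}$ when fed to the HyperLTL validity judgment.) The dictionary has two halves. For \emph{prefixes}: translate the quantifier sequence $Q$ to the HyperLTL prefix $\bar{Q}$ obtained by replacing each $\forall$ by $\A$ and each $\exists$ by $\E$. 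Unfolding validity clauses~\ref{validity:A} and~\ref{validity:E}, $S\models\bar{Q}\,\psi$ holds precisely when $Q_1\,\gamma_1\in M : \cdots : Q_k\,\gamma_k\in M : (\gamma_1,\dots,\gamma_k)\models\psi$, where $M$ is the model associated with $S$; since $\A$ and $\E$ both range over $M$, the membership guard $\vec{t}\subseteq S$ of the $Q$‑hyperproperty definition is matched automatically (an implication under $\A$, a conjunction under $\E$). For \emph{bodies}: I claim a quantifier‑free HyperLTL formula $\psi$, evaluated on $k$‑tuples $\Gamma$, defines exactly a linear‑time $k$‑ary relation on traces, and conversely. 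Granting both halves, the proposition follows, because by the grammar every HyperLTL formula has the shape $\bar{Q}\,\psi$ with $\psi$ quantifier‑free.

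($\Leftarrow$) Write $\phi=\bar{Q}\,\psi$ with $\psi$ quantifier‑free and $k=|Q|$, and put $R=\{\,\vec{t}\in\trinf^k : (\comp(t_1),\dots,\comp(t_k))\models\psi\,\}$. That $R$ is a linear‑time relation follows by induction on $\psi$: each (atomic or compound) proposition constrains only the current $k$‑tuple of state labels; a focus subformula $\<\psi_1,\dots,\psi_n\>$ contributes an $\omega$‑regular (indeed $\X/\U$‑definable) constraint on individual components; and $\vee,\neg,\X,\U$ contribute Boolean and linear temporal structure. Hence $\vec{t}\in R$ is equivalent to membership of $\zip(\comp(t_1),\dots,\comp(t_k))$ in the language of a {\Buchi} automaton (a straightforward extension of the construction $A_\psi$ of~\S\ref{sec:modelchecking} to allow temporal operators inside focus subformulas), so $R$ is $\omega$‑regular, i.e.\ S1S‑definable, which is the sense of ``linear‑time'' used here. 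Combining this with the prefix half, $S\models\phi$ iff the $Q$‑guarded condition ``$\vec{t}\in R$'' holds iff $S\in\hp{H}$, so $\hp{H}$ is a linear‑time $Q$‑hyperproperty.

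($\Rightarrow$) Conversely, let $\hp{H}$ be a linear‑time $Q$‑hyperproperty defined by $R\subseteq\trinf^k$, $k=|Q|$. Since $R$ is linear‑time, after zipping it is defined by an LTL formula $\alpha_R$ over the product alphabet $\mathcal{P}(\Atoms)^k$, whose atomic formulas are predicates on a single $k$‑tuple of state labels. Translate $\alpha_R$ to a HyperLTL body $\psi$ by: (a) choosing the compound propositions and the bonding function $B_k$ so that each $k$‑ary state predicate occurring in $\alpha_R$ is named by a compound proposition --- exactly the use to which the paper puts bonding functions, as in the $\atom{low-equiv}$ example --- and (b) replacing each Boolean/temporal connective of $\alpha_R$ by the corresponding HyperLTL connective. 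The translation is sound because validity clauses~\ref{validity:p},~\ref{validity:not},~\ref{validity:or},~\ref{validity:X},~\ref{validity:U} reproduce the LTL semantics clause for clause on the zipped tuple; hence $(\comp(t_1),\dots,\comp(t_k))\models\psi$ iff $\vec{t}\in R$. Let $\phi=\bar{Q}\,\psi$; it is well‑formed since its prefix has length $k\geq 1$ and any focus subformulas used can be taken of length $k$. By the prefix half, $S\models\phi$ iff the $Q$‑guarded condition over $R$ holds iff $S\in\hp{H}$.

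I expect the main obstacle to be the precise calibration of expressiveness in the body half --- simultaneously showing that quantifier‑free HyperLTL bodies are no more expressive than the $\X/\U$‑definable ($\omega$‑regular, S1S) linear‑time relations (the induction/automaton argument, including the focus case) and no less expressive (which hinges on having a sufficiently rich family of bonding functions to name arbitrary $k$‑ary state predicates). A secondary, bookkeeping obstacle is keeping the trace‑level world of hyperproperties in sync with the propositionally abstracted, computation‑level world of the HyperLTL semantics, so that the guards ``$t_i\in S$'' and the quantifier ranges ``$\gamma\in M$'' genuinely coincide; along the way one invokes the standard identification (cf.~\cite{GabbayPSS80}) of $\X/\U$‑definable linear‑time properties with the S1S‑definable ones.
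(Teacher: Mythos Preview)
Your argument is correct and structurally identical to the paper's: match the prefix $Q$ with the HyperLTL quantifier string $\bar Q$, and match the defining relation $R$ with the quantifier-free body $\psi$. The difference is one of detail. The paper's entire proof is the single sentence ``The relation $R$ that defines $\hp{H}$ is equivalent to formula $\phi$,'' treating the proposition as an immediate restatement of the definition of ``linear-time'' (``can be expressed with the linear-time temporal connectives $\X$ and $\U$''). Everything you spell out---the induction showing bodies are $\omega$-regular (including the focus case), the appeal to bonding functions to name arbitrary $k$-ary state predicates, the trace-versus-computation bookkeeping, and your careful reading of the $Q$-hyperproperty definition as an \emph{iff} rather than the one-sided ``for all $S\in\hp H$'' that is literally written---is left entirely implicit in the paper. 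Your version is the honest expansion of that one-line sketch; the obstacles you flag are genuine technical content the paper simply does not engage with.
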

\begin{proof}
The relation $R$ that defines $\hp{H}$ is equivalent to  formula $\phi$.
\end{proof}
\noindent HyperLTL therefore expresses exactly the linear-time $Q$-hyperproperties, just as LTL expresses exactly the linear-time trace properties, which are themselves $\forall$-hyperproperties.

\subsection{Safety}

\emph{Safety}~\cite{AlpernS85} proscribes ``bad things.'' 
A bad thing is \emph{finitely observable}, meaning its occurrence can be detected in finite time, and \emph{irremediable}, so its occurrence can never be remediated by future events.

\begin{definition}[Hypersafety~\cite{ClarksonS10}]
A hyperproperty $\hp{S}$ is a \emph{safety hyperproperty} (is
\emph{hypersafety}) iff
\begin{multline*}
\forallq{T \subseteq \trinf}{T \notin \hp{S}}{\existsq{B \subseteq \trfin }{|B| \in \mathbb{N} \andsp B \leq T \\ }{\forallq{U \subseteq \trinf}{B \leq U}{U \notin
\hp{S}}}}.
\end{multline*}
For a system $T$ that doesn't satisfy a safety hyperproperty, the bad thing is a finite set $B$ of finite traces.  
$B$ cannot be a prefix of any system $U$ satisfying the hyper safety property.
A finite trace $t$ is a \emph{prefix} of a (finite or infinite) trace $t'$, denoted $t \prefix t'$, iff $t' = tt''$ for some $t'' \in \tr$.
And a finite set $T$ of finite traces is a prefix of a (finite or infinite) set $T'$ of (finite or infinite) traces, denoted $T \leq T'$, iff $\forallqer{t \in T}{\existsqer{t' \in T'}{t \leq t'}}$.
A $k$-safety hyperproperty is a safety hyperproperty in which the bad thing never involves more than $k$ traces.
A hyperproperty $\hp{S}$ is a \emph{$k$-safety hyperproperty} (is \emph{$k$-hypersafety})~\cite{ClarksonS10} iff
\begin{multline*}
\forallq{T \subseteq \trinf}{T \notin \hp{S}}{\existsq{B \subseteq \trfin }{|B| \leq k \andsp B \leq T\\ }{\forallq{U \subseteq \trinf}{B \leq U}{U \notin \hp{S}}}}.
\end{multline*}
\noindent This is just the definition of hypersafety but with the cardinality of $B$ bounded by $k$.
\end{definition}

Define a relation $R$ to be a \emph{$k$-safety relation} iff
\begin{multline*}
\forallq{\vec{t} \in \trinf^k}{\vec{t} \not\in R}
 {\existsq{\vec{b} \in \trfin^{\leq k}}{\vec{b} \leq \vec{t}}
   {\\ \forallq{\vec{u} \in \trinf^k}{\vec{b} \leq \vec{u}}{\vec{u} \not\in R}}}.
\end{multline*}
Prefix on tuples of traces is the pointwise application of prefix on traces:  $\vec{t} \leq \vec{u}$ iff, for all $i$, it holds that $t_i \leq u_i$.
Set $\trfin^{\leq k}$ is all $n$-tuples of traces where $n \leq k$.

Observational determinism is a 2-safety hyperproperty~\cite{ClarksonS10}, as well as a 2-hyperproperty definable by a 2-safety relation.  Moreover, the $k$-safety hyperproperties are all $k$-hyper\-prop\-er\-ties:
\begin{proposition}
A $k$-hyperproperty $\hp{H}$ is definable by a $k$-safety relation iff $\hp{H}$ is a $k$-safety hyperproperty.
\end{proposition}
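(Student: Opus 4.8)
The proof will establish the biconditional by unfolding both notions into their defining conditions and showing the relevant quantified formulas coincide.

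\textbf{Proof proposal.} The plan is to prove both directions by directly relating the defining relation $R$ of the $k$-hyperproperty $\hp{H}$ to the finite ``bad things'' $B$ of the $k$-safety hyperproperty. First I would fix notation: since $\hp{H}$ is a $k$-hyperproperty, it has a defining relation $R \subseteq \trinf^k$ such that $S \in \hp{H}$ iff every $\vec{t} \in \trinf^k$ with $\mathit{set}(\vec{t}\,) \subseteq S$ lies in $R$. The central observation I would exploit is the correspondence between finite trace sets $B$ of cardinality at most $k$ and finite tuples $\vec{b} \in \trfin^{\leq k}$: a $k$-safety ``bad thing'' witnessed by such a $B$ corresponds, up to choosing an enumeration and possibly padding, to a $\vec{b}$ witnessing that a tuple is outside a $k$-safety relation. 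I would note carefully that $B \leq T$ for a set of finite traces $B$ unpacks, via the definition of prefix on sets, into the existence of trace extensions, which is exactly what pointwise prefix $\vec{b} \leq \vec{t}$ gives on the tuple side.

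For the ($\Leftarrow$) direction, I would assume $\hp{H}$ is a $k$-safety hyperproperty and produce a $k$-safety relation defining it. The natural candidate is $R$ itself, the relation already defining $\hp{H}$ as a $k$-hyperproperty; I must check it is a $k$-safety relation. Given $\vec{t} \notin R$, I would form $S = \mathit{set}(\vec{t}\,)$ wait --- more carefully, I need a system witnessing failure. Since $\vec{t} \notin R$, any system containing all of $\mathit{set}(\vec{t}\,)$ is not in $\hp{H}$ (by the defining property of $R$); applying the $k$-hypersafety of $\hp{H}$ to such a system yields a finite bad set $B$ with $|B| \leq k$ and $B \leq \{t_1,\ldots,t_k\}$, whose every extension stays outside $\hp{H}$. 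Turning $B$ into a tuple $\vec{b} \in \trfin^{\leq k}$ with $\vec{b} \leq \vec{t}$ and such that every $\vec{u} \geq \vec{b}$ has $\vec{u} \notin R$ is the step requiring care: I must use that any tuple $\vec{u} \geq \vec{b}$ has $\mathit{set}(\vec{u}\,)$ extending $B$, hence forms (together with enough traces) a system outside $\hp{H}$, hence $\vec{u} \notin R$.

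For the ($\Rightarrow$) direction, I would assume $\hp{H}$ is a $k$-hyperproperty definable by a $k$-safety relation $R$ and show $\hp{H}$ is $k$-hypersafety. Given $T \notin \hp{H}$, the defining property of $R$ gives a tuple $\vec{t}$ with $\mathit{set}(\vec{t}\,) \subseteq T$ and $\vec{t} \notin R$; the $k$-safety of $R$ then yields a finite tuple $\vec{b} \leq \vec{t}$ with all extensions outside $R$. Setting $B = \mathit{set}(\vec{b}\,)$ gives $|B| \leq k$ and $B \leq T$, and for any $U \geq B$ I would extract from $U$ a tuple extending $\vec{b}$ to conclude $U \notin \hp{H}$. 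The main obstacle, in both directions, is the bookkeeping between sets and tuples: a set $B$ of size $< k$ can be enumerated into a $k$-tuple only with repetitions, and $\vec{b} \leq \vec{u}$ is pointwise while $B \leq U$ is existential per element, so I must verify that these translations preserve exactly the right quantifier structure and that repeated entries in tuples cause no trouble (they do not, since $R$ is closed under the systems generated by its tuples' underlying sets). I expect this combinatorial matching of the two prefix notions to be the only genuinely delicate part; everything else is unwinding definitions.
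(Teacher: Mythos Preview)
Your ($\Rightarrow$) direction is correct and matches the paper: from a witnessing $\vec t\notin R$ you extract $\vec b$ via the $k$-safety of $R$, and $B=\mathit{set}(\vec b\,)$ serves as the bad thing.

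Your ($\Leftarrow$) direction, however, has a genuine gap. You propose to take \emph{the given} defining relation $R$ and show it is itself a $k$-safety relation. But defining relations are not unique, and the one handed to you need not be $k$-safety even when $\hp{H}$ is $k$-hypersafety. Concretely: take $k=2$, traces over $\{0,1\}$, and $\hp{H}=\{\emptyset,\{0^\omega\}\}$, which is $2$-hypersafety. The relation
\[
R'=\{(0^\omega,0^\omega)\}\;\cup\;\{(t,0^\omega): t\text{ has infinitely many }1\text{'s}\}
\]
defines this same $\hp{H}$ (any $S$ containing a trace $t\neq 0^\omega$ is rejected because $(t,t)\notin R'$), yet $R'$ is not a $2$-safety relation: the tuple $(010^\omega,0^\omega)$ lies outside $R'$, but every finite $\vec b$ below it extends to $(b_1 1^\omega,0^\omega)\in R'$.

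The broken step in your argument is the inference ``$\mathit{set}(\vec u\,)\notin\hp{H}$, hence $\vec u\notin R$.'' Knowing that the system $\mathit{set}(\vec u\,)$ is rejected only tells you that \emph{some} $k$-tuple drawn from it lies outside $R$---not that $\vec u$ itself does. The paper sidesteps this by constructing the relation explicitly rather than reusing the given one: take $R$ to be the set of all $k$-tuples that do not have a bad thing as a prefix. With that choice, the implication you need is immediate by definition, and this $R$ is easily seen both to define $\hp{H}$ and to be a $k$-safety relation.
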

\begin{proof}
($\Rightarrow$) The bad thing for a system that doesn't satisfy $\hp{H}$ is tuple $\vec{b}$.
($\Leftarrow$) The relation is the set of all $k$-tuples of traces that do not contain a bad thing as a prefix.
\end{proof}
The $k$-safety hyperproperties are known~\cite{ClarksonS10} to have a relatively complete verification methodology based on self-composition.
Our model-checking algorithm in \S\ref{sec:modelchecking} increases the class of hyperproperties that can be verified from $k$-safety to $k$-hyperproperties and a fragment of $Q$-hyper\-prop\-er\-ties.

\subsection{Arithmetic hierarchy}

The $Q$-hyperproperties categorize by quantifier structure.
The \emph{arithmetic hierarchy}, first studied by Kleene~\cite{Kleene43}, similarly categorizes computable relations.
Rogers~\cite{Rogers87} gives the following characterization of the arithmetic hierarchy:  
\begin{definition}[Arithmetic hierarchy~\cite{Rogers87}]
An $n$-ary relation $S$ is in the arithmetic hierarchy iff $S$ is decidable or there exists a decidable $k$-ary relation $R$ such that 
\begin{multline*}
S = \setdef{(x_1, x_2, \ldots, x_n)}{Q_1 y_1 Q_2 y_2 \ldots Q_k y_k : \\ R(x_1, \ldots, x_n, y_1, \ldots, y_k)},
\end{multline*}
where, for all $1 \leq i \leq k$, quantifier $Q_i$ is either $\forall$ or $\exists$.
The sequence of quantifiers $Q_i$ is the \emph{quantifier prefix}.  
When such a prefix does exist for $S$, then $S$ is \emph{expressible by $Q_i$}.
The \emph{number of alternations} in a prefix the number of pairs of adjacent but unlike quantifiers.
For example, in the prefix $\forall\forall\exists\forall$, there are two alternations.
A \emph{$\Sigma_n$-prefix}, where $n > 0$, is a prefix that begins with $\exists$ and has $n-1$ alternations.
A \emph{$\Sigma_0$-prefix} is a prefix that is empty.
Likewise, a \emph{$\Pi_n$-prefix}, where $n > 0$, is a prefix that begins with $\forall$ and has $n-1$ alternations.
A \emph{$\Pi_0$-prefix} is a prefix that is empty, so $\Pi_0$-prefixes are the same as $\Sigma_0$-prefixes.
The arithmetic hierarchy comprises the following classes:
\begin{itemize}
\item $\Sigma_n$ is the class of all relations expressible by $\Sigma_n$-prefixes.
\item $\Pi_n$ is the class of all relations expressible by $\Pi_n$-prefixes.
\item (Another class, $\Delta_n = \Sigma_n \cap \Pi_n$, does not concern us here.)
\end{itemize}
A relation expressible by $\exists\forall$ is, for example, in $\Sigma_2$, and a relation expressible by $\forall$ is in $\Pi_1$, but a relation expressible by $\forall\forall$ is also in $\Pi_1$.
These classes form a hierarchy, because they grow strictly larger as $n$ increases: $\Sigma_n \subset \Sigma_{n+1}$ and $\Pi_n \subset \Pi_{n+1}$.
\end{definition}

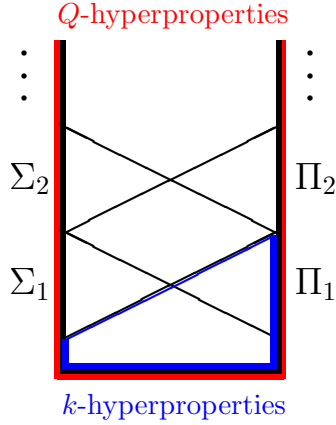
\begin{figure}
\setlength{\unitlength}{1mm}
\begin{center}
\begin{picture}(40,60)

	{
	\color{red}
	\linethickness{0.7mm}
	\put(5,5){\line(1,0){30}}
	\put(5,4.7){\line(0,1){45}}
	\put(35,4.7){\line(0,1){45}}
	
	\put(5,52){$Q$-hyperproperties}
	}
	
	{
	\linethickness{0.7mm}
	\put(5.7,5.7){\line(1,0){28.7}}
	\put(5.7,5.4){\line(0,1){44.2}}
	\put(34.4,5.4){\line(0,1){44.2}}

	\thicklines
	\put(5.7,10){\line(2,1){29}}
	\put(5.7,24){\line(2,1){29}}
	\put(34.4,10){\line(-2,1){29}}
	\put(34.4,24){\line(-2,1){29}}
	}

	{
	\color{blue}
	\linethickness{0.7mm}
	\put(3.5,6.4){\line(1,0){28}}
	\put(3.5,6){\line(0,1){3.9}}
	\put(31.1,6){\line(0,1){17.7}}

	\thicklines
	\put(3.5,9.8){\line(2,1){27.5}}
	
	\put(3,0){$k$-hyperproperties}
	}
	
	\put(-4,16){\Large $\Sigma_1$}
	\put(-4,30){\Large $\Sigma_2$}
	\put(-2,42){\circle*{0.8}}
	\put(-2,45){\circle*{0.8}}
	\put(-2,48){\circle*{0.8}}

	\put(34,16){\Large $\Pi_1$}
	\put(34,30){\Large $\Pi_2$}
	\put(36,42){\circle*{0.8}}
	\put(36,45){\circle*{0.8}}
	\put(36,48){\circle*{0.8}}

\end{picture}
\end{center}
\caption{Arithmetic hierarchy of hyperproperties\label{fig:hierarchy}}
\end{figure}

The same idea is applicable to $Q$-hyperproperties:
\begin{itemize}
\item The $\Sigma_n$-hyperproperties are the $Q$-hyperproperties such that $Q$ is a $\Sigma_n$-prefix and the defining relation $R$ is decidable.
\item The $\Pi_n$-hyperproperties are the $Q$-hyperproperties such that $Q$ is a $\Pi_n$-prefix and the defining relation $R$ is decidable.
\end{itemize}
Figure~\ref{fig:hierarchy} depicts this hierarchy.
Simply by reading off the quantifier prefix, any HyperLTL formula makes it easy to determine (an upper bound on) the hierarchy level in which it dwells.
Observational determinism (whose prefix is $\AAx$) is a $\Pi_1$-hyperproperty, as are all $k$-hyperproperties.  
Noninference (prefix $\AEx$) is a $\Pi_2$-hyperproperty, as are separability and generalized noninterference (prefix $\AAEx$).  
Their defining relations are decidable, because HyperLTL$_2$ validity is decidable. 

This arithmetic hierarchy of hyperproperties yields insight into verification.
Our model-checking algorithm in \S\ref{sec:modelchecking} permits up to one quantifier alternation, thus verifying a linear-time subclass of $\Pi_2$-hyperproperties.  
What about hyperproperties higher than $\Pi_2$ in the hierarchy? 
We don't yet know of any security policies that are examples.
As Rogers~\cite{Rogers87} writes, ``The human mind seems limited in its ability to understand and visualize beyond four or five alternations of quantifier.  Indeed, it can be argued that the inventions{\ldots}of mathematics are devices for assisting the mind in dealing with one or two additional alternations of quantifier.''
For practical purposes, we might not need to go much higher than $\Pi_2$.

\section{Related Work}
\label{sec:relatedwork}

McLean~\cite{McLean:1994:GeneralTheory} formalizes security policies as  closure with respect to \emph{selective interleaving functions}. 
He shows that trace properties cannot express security policies such as noninterference and average response time, because those are not properties of single execution traces.
Mantel~\cite{Mantel00} formalizes security policies with \emph{basic security predicates}, which stipulate \emph{closure conditions} for trace sets.

Clarkson and Schneider~\cite{ClarksonS10} introduce \emph{hyperproperties}, a framework for expressing security policies. 
Hyperproperties are sets of trace sets, and are able to formalize security properties such as noninterference, generalized noninterference, observational determinism and average response time. 
Clarkson and Schneider use second-order logic to formulate hyperproperties.  
That logic isn't verifiable, in general, because it cannot be effectively and completely axiomatized. 
Fragments of it, such as HyperLTL, can be verified.

van der Meyden and Zhang~\cite{VanDerMeyden:2007:verifOfNonInterf} use model-checking to verify  noninterference policies.
They reduce noninterference properties to safety properties expressible in standard linear and branching time logics. 
Their methodology requires customized model-checking algorithms for each security policy, whereas HyperLTL uses the same algorithm for every policy.

Dimitrova et al.~\cite{Dimitrova:2012:SecLTL} propose SecLTL, which extends LTL with a \emph{hide} modality $\mathcal{H}$ that requires observable behavior to be independent of secret values.
SecLTL is designed for output-deterministic systems.
Generalized noninterference~\eqref{hp:gni}, and other policies for nondeterministic systems, do not seem to be expressible with $\mathcal{H}$.

Balliu et al.~\cite{Balliu::Epis} use a linear-time temporal epistemic logic to specify many declassification policies derived from noninterference.
Their definition of noninterference, however, seems to be that of observational determinism~\eqref{hp:od}.  
They do not consider any information-flow policies involving existential quantification, such as noninference~\eqref{hp:ni}. 
They also do not consider systems that accept inputs after execution has begun.
Halpern and O'Neill~\cite{HalpernO08} use a similar temporal epistemic logic to specify \emph{secrecy} policies, which subsume many definitions of noninterference; they do not pursue model checking algorithms. 

Milishev and Clarke~\cite{MilushevC12,Milushev:2013:thesis} propose a verification methodology based on formulating hyperproperties as coinductive predicates over trees.  
They use the polyadic modal $\mu$-cal\-cu\-lus~\cite{Andersen94} to express hyperproperties and \emph{game-based} model-checking to verify them. 
Their logic, because it includes fixpoint operators, seems to be more expressive than HyperLTL.  
Nonetheless, HyperLTL is able to express many security policies, suggesting that a simpler logic suffices.

\section*{Acknowledgements}
Fred B. Schneider suggested the name ``HyperLTL'' for our logic.
We thank him, Dexter Kozen, Jos\'e Meseguer, and Moshe Vardi for discussions about this work.
Adam Hinz worked on an early prototype of the model checker.
This work was supported in part by AFOSR grant FA9550-12-1-0334 and NSF grant CNS-1064997.

\bibliographystyle{plain}
\bibliography{references}

\appendix


\section{The insufficiency of branching-time logic}
\label{sec:whynotctl}

CTL and CTL$^*$ have explicit path quantifiers.  
It's tempting to try to express security policies with those quantifiers.
Unfortunately, that doesn't work for information-flow policies such as observational determinism~\eqref{hp:od}.
Consider the following fragment of CTL$^*$ semantics~\cite{EmersonH86}:
\begin{equation*}
\begin{array}{ll}
s \models \A \phi &~\text{iff for all $\pi \in M$, if $\pi(1) = s$ then $\pi \models \phi$} \\
\pi \models \Phi &~\text{iff $\pi(1) \models \Phi$}
\end{array}
\end{equation*}
\emph{Path formulas} $\phi$ are modeled by paths $\pi$, and \emph{state} formulas $\Phi$ are modeled by states $s$.
Set $M$ is all paths in the model.
State formula $\A \phi$ holds at state $s$ when all paths proceeding from $s$ satisfy $\phi$.
Any state formula $\Phi$ can be treated as a path formula, in which case $\Phi$ holds of the path iff $\Phi$ hold of the first state on that path.
Using this semantics, consider the meaning of $\AAx \phi$, which is the form of observational determinism~\eqref{hp:od}:
\begin{equation*}
\begin{array}{l}
s \models \AAx \phi \\
= \text{for all $\pi \in M$ if $\pi(1)=s$ then $\pi \models \A \phi$} \\
= \text{for all $\pi \in M$ and $\pi' \in M$, if $\pi(1)=\pi'(1)=s$ then $\pi' \models \phi$} 
\end{array}
\end{equation*}
Note how the meaning of $\AAx \phi$ is ultimately determined by the meaning of $\phi$, where $\phi$ is modeled by the single trace $\pi'$.  
Trace $\pi$ is ignored in determining the meaning of $\phi$;
the second universal path quantifier causes $\pi$ to ``leave scope.''
Hence $\phi$ cannot express correlations between $\pi$ and $\pi'$, as observational determinism requires.
So CTL$^*$ path quantifiers do not suffice to express information-flow policies.
Neither do CTL path quantifiers, because CTL is a sub-logic of CTL$^*$.
Self-composition does enable expression of some, though not all, information-flow policies in branching-time logics, as we discuss in \S\ref{sec:HLTLvsOthers}.

\section{Model-checking Constructions}
\label{sec:constructions}

\subsection{Self-composition construction}
\label{sec:selfcomp}

Self-composition is the technique that Barthe et al.~\cite{BartheDR04} adopt to verify noninterference policies.  It was generalized by Terauchi and Aiken~\cite{TerauchiA05} to verify observational determinism policies~\cite{Zdancewic:2005:policyDown,ZdancewicMyers:2003:OD},
and by Clarkson and Schneider~\cite{ClarksonS10} to verify $k$-safety hyperproperties.
We  extend this technique to model-checking of {\Hp}.

\paragraph*{{\Buchi} automata.} {\Buchi} automata~\cite{Vardi:1994:infComp} are finite-state automata that accept strings of infinite length.
A {\Buchi} automaton is a tuple $(\Sigma,S,\Delta,S_0,F)$ where $\Sigma$ is an alphabet, $S$ is the set of states, $\Delta$ is the transition relation such that $\Delta\subseteq S\times\Sigma\times S$, $S_0$ is the set of initial states, and $F$ is the set of accepting states, where both $S_0\subseteq S$ and $F\subseteq S$.
A \emph{string} is a sequence of letters in $\Sigma$.
A path $s_0s_1\dots$ of a {\Buchi} automaton is \emph{over} a string $\alpha_1\alpha_2\dots$ if, for all $i\geq 0$, it holds that $(s_i,\alpha_{i+1},s_{i+1})\in\Delta$.
A string is \emph{recognized} by a {\Buchi} automaton if there exists a path $\pi$ over the string with some accepting states occurring infinitely often, in which case $\pi$ is an \emph{accepting path}.
The \emph{language} $\mathcal{L}(A)$ of an automaton $A$ is the set of strings that automaton accepts.
A {\Buchi} automaton can be derived~\cite{Clarke:1999:model-checking} from a Kripke structure, which   is a common mathematical model of interactive, state-based systems.

\paragraph*{Self composition.}
The \emph{$n$-fold self-composition} $A^n$ of {\Buchi} automaton $A$ is essentially the product of $A$ with itself, $n$ times. This construction is defined as follows:
\begin{definition}
{\Buchi} automaton $A^n$ is the \emph{$n$-fold self-composition} of {\Buchi} automaton $A$, where $A=(\Sigma,S,\Delta,S_0,F)$, if $A^n=(\Sigma^n,S^n,\Delta',S^n_0,F^n)$ and for all $s_1,s_2\in S^n$ and $\alpha\in\Sigma^n$ we have $(s_1,\alpha,s_2)\in\Delta' $ iff
for all $1\leq i\leq n$, it holds that $(\proj_i(s),\proj_i(\alpha),\proj_i(s'))\in\Delta$.
\end{definition}
\noindent $A^n$ recognizes $zip(\gamma_1, \dots, \gamma_n)$ if $A$ recognizes each of $\gamma_1,\dots,\gamma_n$:
\begin{proposition}
$\mathcal{L}(A^n) = \{\zip(\gamma_1,\dots,\gamma_n) \mid \gamma_1,\dots,\gamma_n\in\mathcal{L}(A)\}$
\end{proposition}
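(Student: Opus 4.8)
The plan is to prove the two language-containments $\mathcal{L}(A^n) \subseteq \{\zip(\gamma_1,\dots,\gamma_n) \mid \gamma_i \in \mathcal{L}(A)\}$ and $\{\zip(\gamma_1,\dots,\gamma_n) \mid \gamma_i \in \mathcal{L}(A)\} \subseteq \mathcal{L}(A^n)$ separately, in each case by unwinding the definitions of acceptance for {\Buchi} automata together with the definition of the product transition relation $\Delta'$. The core observation, which I would isolate as a lemma, is a correspondence between paths: a sequence $s_0 s_1 \dots$ of states of $A^n$ (so each $s_i \in S^n$) is a path over a word $w = \alpha_1 \alpha_2 \dots \in (\Sigma^n)^\omega$ if and only if, for every $1 \le \ell \le n$, the projected sequence $\proj_\ell(s_0)\,\proj_\ell(s_1)\dots$ is a path of $A$ over the projected word $\proj_\ell(\alpha_1)\,\proj_\ell(\alpha_2)\dots$. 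This is immediate from the definition of $\Delta'$: $(s_i,\alpha_{i+1},s_{i+1}) \in \Delta'$ holds iff $(\proj_\ell(s_i),\proj_\ell(\alpha_{i+1}),\proj_\ell(s_{i+1})) \in \Delta$ for all $\ell$, so quantifying over all $i \ge 0$ commutes with quantifying over all $\ell$.

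Given that path correspondence, I would handle the acceptance conditions. For the forward inclusion: take an accepting path $\pi = s_0 s_1 \dots$ of $A^n$ over $w$. By definition of $A^n$, $S^n_0 = (S_0)^n$ and $F^n = (F)^n$, so $s_0 \in (S_0)^n$ means $\proj_\ell(s_0) \in S_0$ for each $\ell$, and some state of $F^n$ recurs infinitely often along $\pi$. Since $F^n = F \times \cdots \times F$, a state $t \in F^n$ recurring infinitely often forces each coordinate $\proj_\ell(t) \in F$ to recur infinitely often in the $\ell$-th projected path (indeed the same index set witnesses it). Hence each projected path is an accepting path of $A$ over $\proj_\ell(w)$, so $\proj_\ell(w) \in \mathcal{L}(A)$ for each $\ell$; writing $\gamma_\ell = \proj_\ell(w)$ and noting $w = \zip(\gamma_1,\dots,\gamma_n)$ gives membership in the right-hand set. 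For the reverse inclusion: given accepting paths $\pi_\ell = s^\ell_0 s^\ell_1 \dots$ of $A$ over each $\gamma_\ell$, assemble $s_i = (s^1_i,\dots,s^n_i)$; the path correspondence lemma makes $s_0 s_1 \dots$ a path of $A^n$ over $\zip(\gamma_1,\dots,\gamma_n)$, it starts in $(S_0)^n$, and — here is the one genuinely delicate point — I need a single index set along which $F^n$ recurs.

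The main obstacle is exactly that last point: each $\pi_\ell$ visits $F$ infinitely often, but possibly at disjoint sets of positions, so the naive product path need not visit $F^n$ infinitely often at all. The standard fix is a generalized-{\Buchi}-to-{\Buchi} style argument, but the cleanest route here is to observe that for an $\omega$-word in $\mathcal{L}(A)$ one may always choose an accepting run; however to synchronize $n$ runs one should instead argue that the product automaton's acceptance is equivalent to a \emph{generalized} {\Buchi} condition with the $n$ acceptance sets $\{\,t \in S^n : \proj_\ell(t) \in F\,\}$, and then invoke the well-known equivalence between generalized and ordinary {\Buchi} acceptance. Alternatively — and this is what I would actually write, to stay self-contained — I would note that it suffices to exhibit \emph{some} accepting run of $A^n$, and I can build one by a round-robin construction: since each $A$ is nondeterministic we cannot literally pause a run, so instead I would phrase the reverse inclusion at the level of languages more carefully, observing that $\gamma_\ell \in \mathcal{L}(A)$ for all $\ell$ gives, by the product construction's design, that $A^n$ accepts $\zip(\gamma_1,\dots,\gamma_n)$ precisely because the definition of $A^n$ as stated here uses $F^n$, which is adequate when we are free to reuse the \emph{same} accepting run structure. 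In the write-up I would make the synchronization rigorous via the generalized-{\Buchi} detour, flag it as the only nontrivial step, and keep the projection/transition bookkeeping terse since it is entirely routine.
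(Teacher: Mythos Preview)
Your instinct about the synchronization of accepting states is exactly right, and in fact you have put your finger on a genuine defect: the proposition is \emph{false} as stated, given the paper's definition $A^n = (\Sigma^n, S^n, \Delta', S_0^n, F^n)$ with the single {\Buchi} acceptance set $F^n$. Your proposed remedies---the round-robin idea, or the generalized-{\Buchi} detour---cannot rescue the argument, because no argument establishes the reverse inclusion in general.

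A concrete counterexample: let $A$ be the deterministic {\Buchi} automaton with states $\{p,q\}$, initial state $p$, accepting set $F=\{p\}$, alphabet $\{a,b\}$, and transitions $(p,a,p)$, $(p,b,q)$, $(q,a,p)$, $(q,b,q)$; then $\mathcal{L}(A)$ is the set of $\omega$-words with infinitely many $a$'s. Take $\gamma_1=(ab)^\omega$ and $\gamma_2=(ba)^\omega$, both in $\mathcal{L}(A)$. Since $A$ is deterministic, so is $A^2$, and the unique run of $A^2$ on $\zip(\gamma_1,\gamma_2)$ is $(p,p),(p,q),(q,p),(p,q),(q,p),\ldots$, which hits $F^2=\{(p,p)\}$ only at position $0$. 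Hence $\zip(\gamma_1,\gamma_2)\notin\mathcal{L}(A^2)$ even though both components lie in $\mathcal{L}(A)$.

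For comparison, the paper's own proof is simply ``By the construction of $A^n$,'' which does not engage the issue at all. What rescues the paper's intended application is that $A_M$ is derived from a Kripke structure, so in practice $F=S$; then $F^n=S^n$, every run is accepting, and the synchronization problem vanishes trivially. Your generalized-{\Buchi} suggestion is the right \emph{fix to the construction}---replace $F^n$ by the generalized condition with component sets $\{t\in S^n:\proj_\ell(t)\in F\}$, or equivalently add the standard modulo-$n$ counter---but that changes the definition of $A^n$; it does not prove the proposition as written. So your diagnosis is sharper than the paper's, but the honest conclusion is that either the definition of $A^n$ or the statement of the proposition needs amendment, not that a more careful proof will go through.
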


\begin{proof}
By the construction of $A^n$.
\end{proof}

\subsection{Formula-to-automaton construction}
\label{sec:automataCons}

Given a {\Hp} formula $\Ak\Ej\psi$ and a set $B$ of bonding functions, we now show how to construct an automaton that accepts exactly the strings $w$ for which $\unzip(w)\models\psi$. 
Our construction extends standard methodologies for  LTL automata construction~\cite{GerthVardi:1995:OntheFlyLTL,Vardi:2007:automata-theoretic-rev,Vardi:1996:LTL,Gastin:2001:FastLTLtoBuchi}.

\paragraph{1. Negation normal form.}
We begin by preprocessing $\psi$ to put it in a form more amenable to model checking.
The formula is rewritten to be in \emph{negation normal form} (NNF), meaning (i)  negation connectives are applied only to atomic propositions in $\psi$, (ii) the only connectives used in $\psi$ are \X{}, \U{}, \R{}, $\neg$, $\vee$, $\wedge$, and focus formulas, and (iii)
every focus formula contains exactly one non-$\top$ subformula---for example, $\<\top,\ldots,\top,p,\top,\ldots,\top\>$ or $\<\top,\ldots,\top,\neg p,\top,\ldots,\top\>$---and that subformula must be an atomic proposition or its negation.
We identify $\neg\neg\psi$ with $\psi$.

\paragraph{2. Construction.} 
We now construct a \emph{generalized {\Buchi} automaton}~\cite{CourcoubVardi:1992:MemEffAlgo} $A_{\psi}$ for $\psi$. 
A generalized {\Buchi} automaton is the same as a {\Buchi} automaton except that it has multiple sets of accepting states. 
That is, a generalized {\Buchi} automaton is a tuple $(\Sigma,S,\Delta,S_0,F)$ where $\Sigma$, $S$, $\Delta$ and $S_0$ are defined as for {\Buchi} automata, and $F=\setdef{F_i}{1\leq i\leq m \text{ and } F_i\subseteq S}$.
Each of the $F_i$ is an \emph{accepting set}.
A string is recognized by a generalized {\Buchi} automaton if there is a path over the string with at least one of the states in every accepting set occurring infinitely often.

To construct the states of $A_\psi$, we need some additional definitions. 
Define \emph{closure} $\cl(\psi)$ of $\psi$ to be the least set of subformulas of $\psi$ that is closed under the following rules:
\begin{itemize}
	\item if $\<\top,\ldots,\top,\psi',\top,\ldots,\top\>\in \cl(\psi)$, then $\<\top,\ldots,\top,\neg \psi',\top,\ldots,\top\>\in \cl(\psi)$.
	\item if $\psi'\in \cl(\psi)$ and $\psi'$ is not in the form of $\<\psi_1,\dots,\psi_n\>$, then $\neg\psi'\in \cl(\psi)$.
	\item if $\psi_1\wedge \psi_2\in \cl(\psi)$ or $\psi_1\vee \psi_2\in \cl(\psi)$, then $\{\psi_1,\psi_2\}\subseteq \cl(\psi)$.
	\item if $\X \psi'\in \cl(\psi)$, then $\psi'\in \cl(\psi)$.
	\item if $\psi_1\U \psi_2\in \cl(\psi)$ or $\psi_1\R \psi_2\in \cl(\psi)$, then $\{\psi_1,\psi_2\}\subseteq \cl(\psi)$.
\end{itemize} 
And define $K$ to be a \emph{maximal consistent set} with respect to $\cl(\psi)$ if $K\subseteq\cl(\psi)$ and the following conditions hold:
\begin{itemize}
	\item if $\psi'$ is not a focus formula, then ($\psi'\in K$ iff $\neg \psi'\not\in K$).
	\item if $\psi'$ is a focus formula $\<\top,\ldots,\top,\psi',\top,\ldots,\top\>$ and $\psi' \in \cl(\psi)$, then ($\psi'\in K$ iff $\<\top,\ldots,\top,\neg \psi',\top,\ldots,\top\> \not\in K$).
	\item if $\psi_1\wedge \psi_2\in \cl(\psi)$, then ($\psi_1\wedge \psi_2\in K$ iff $\{\psi_1,\psi_2\} \subseteq K$).
	\item if $\psi_1\vee \psi_2\in \cl(\psi)$, then ($\psi_1\vee \psi_2\in K$ iff $\psi_1\in K$ or $\psi_2\in K$).
	\item if $\psi_1\U \psi_2\in K$ then $\psi_1\in K$ or $\psi_2\in K$.
	\item if $\psi_1\R \psi_2\in K$ then $\psi_2\in K$.
\end{itemize} 

\noindent Define $\ms(\psi)$ to be the set of all maximal consistent sets with respect to $\psi$. 
The elements of $\ms(\psi)$ will be the states of $A_\psi$; hence each state is a set of formulas.
Intuitively, a state $s$ describes a set of computation tuples where each tuple is a model of all the formulas in $s$.  
There will be a transition from a state $s_1$ to a state $s_2$ iff every computation tuple described by $s_2$ is an \emph{immediate suffix} of some tuple described by $s_1$.
(Tuple $\Gamma$ is an immediate suffix of $\Gamma'$ iff $\Gamma = \Gamma'[2..]$.)

Automaton $A_{\psi}=(\Sigma_{\psi},S_{\psi},\Delta_{\psi},\{\iota_{\psi}\},F_{\psi})$ is defined as follows:
\begin{itemize}
	\item The alphabet $\Sigma_{\psi}$ is $\mathcal{P}(\Atoms)^n$.  Each letter of the alphabet is, therefore, an $n$-tuple of sets of atomic propositions.
	\item The set $S_{\psi}$ of states is $\ms(\psi) \cup \{\iota_\psi\}$, where $\ms(\psi)$ is defined above and $\iota_{\psi}$ is a distinct initial state.
	\item The transition relation $\Delta_\psi$ contains $(s_1,\alpha,s_2)$, where  $\{s_1,s_2\} \subseteq S_{\psi}\setminus\{\iota_\psi\}$ and $\alpha\in\Sigma_{\psi}$, iff
		\begin{itemize}
			\item For all $p\in \Atoms$, if $\<\top,\ldots,\top,p,\top,\ldots,\top\> \in s_2$, and $p$ is element $i$ of that focus formula, then $p\in \proj_i(\alpha)$.
				Likewise, if $\<\top,\ldots,\top,\neg p,\top,\ldots,\top\> \in s_2$, then $p\not\in \proj_i(\alpha)$.
			\item For all $p\in \Compounds$, if $p\in s_2$ then $p\in B(\alpha).$ Likewise, if $\neg p\in s_2$ then $p\not\in B(\alpha).$
			\item If $\X\psi'\in s_1$ then $\psi'\in s_2$. 
			\item If $\psi_1\U\psi_2\in s_1$ and $\psi_2\not\in s_1$ then $\psi_1\U\psi_2\in s_2$. 
			\item If $\psi_1\R\psi_2\in s_1$ and $\neg\psi_1\in s_1$ then $\psi_1\R\psi_2\in s_2$. 
		\end{itemize}
  And $\Delta_\psi$ contains $(\iota_{\psi},\alpha,s_2)$ iff $\psi\in s_2$ and $(\iota_{\psi},\alpha,s_2)$ is a transition permitted by the above rules for $\Atoms$ and $\Compounds$. 
    \item The set of initial states contains only $\iota_\psi$.
	\item The set $F_\psi$ of sets of accepting states contains one set $\setdef{s\in (S_{\psi}\setminus\{\iota_\psi\})}{\neg(\psi_1\U\psi_2)\in s \text{ or }\psi_2\in s}$ for each until formula $\psi_1\U\psi_2$ in $\cl(\psi)$.
\end{itemize}
The definition of $F_\psi$ guarantees that, for every until formula $\psi_1\U\psi_2$, eventually $\psi_2$ will hold.
	That is because the transition rules don't allow a transition from a state containing $\psi_1\U\psi_2$ to a state containing $\neg(\psi_1\U\psi_2)$ unless $\psi_2$ is already satisfied.

\paragraph{3. Degeneralization of {\Buchi} automata.} Finally, convert generalized {\Buchi} automaton $A_{\psi}$ to a ``plain'' {\Buchi} automaton. This conversion is entirely standard~\cite{GerthVardi:1995:OntheFlyLTL}, so we do not repeat it here.

\paragraph*{Correctness of the construction.}
The following proposition states that $A_\psi$ is constructed such that it recognizes computation tuples that model $\psi$:
\begin{proposition}
 $\Gamma\models\psi$ iff $\zip(\Gamma)\in\mathcal{L}(A_{\psi})$.
\end{proposition}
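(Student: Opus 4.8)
The plan is to prove the biconditional by the standard correctness argument for LTL‑to‑automaton translations, adapted to tuples of computations. Since the negation‑normal‑form rewriting in step~1 of the construction preserves the satisfaction relation $\models$, and since the degeneralization in step~3 preserves the recognized language, it suffices to assume $\psi$ is already in NNF and to reason about the generalized {\Buchi} automaton produced in step~2; the invariant established for it transfers to the plain {\Buchi} automaton because the two have the same language. The heart of the proof is the following invariant about \emph{accepting} paths. Let $\rho = \iota_\psi\, s_1\, s_2 \cdots$ be an accepting path of $A_\psi$ over the word $w = \zip(\Gamma)$, so that $(\iota_\psi, w[1], s_1)\in\Delta_\psi$ and $(s_m, w[m+1], s_{m+1})\in\Delta_\psi$ for all $m\geq 1$, with each $s_m\in\ms(\psi)$. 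Then for every position $m\geq 1$ and every subformula $\chi\in\cl(\psi)$,
\begin{equation*}
\chi\in s_m \quad\text{iff}\quad \Gamma[m..]\models\chi .
\end{equation*}
I would prove this invariant by induction on the structure of $\chi$, noting that the letter on the transition arriving at $s_m$ is $w[m]=\zip(\Gamma)[m]=\Gamma[m]$.

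For the base cases---$\chi$ an atomic proposition, a compound proposition, a one‑literal focus formula $\<\top,\ldots,\top,p,\top,\ldots,\top\>$, or a negation of one of these---the transition constraints governing $\Atoms$, $\Compounds$, and the bonding function $B$, combined with the semantic clauses for $p$, for $\neg$, and for the focus connective, give the equivalence directly: the constraints say exactly that the literals (resp.\ focus‑literals at position $i$) in $s_m$ agree with $B(\Gamma[m])$ (resp.\ with $\proj_i(\Gamma[m])$), which matches the clauses ``$\Gamma\models p$ iff $p\in B(\Gamma[1])$'' and the focus clause, using $B_1=\mathit{id}$. For $\psi_1\wedge\psi_2$ and $\psi_1\vee\psi_2$ the maximal‑consistency conditions on $s_m$ plus the induction hypothesis on the immediate subformulas suffice. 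For $\X\chi'$ the key direction uses the transition rule ``if $\X\psi'\in s_m$ then $\psi'\in s_{m+1}$''. For $\psi_1\U\psi_2$ one uses the ``local'' unfolding baked into maximal consistency ($\psi_1\U\psi_2\in K$ implies $\psi_1\in K$ or $\psi_2\in K$) together with the persistence rule ``$\psi_1\U\psi_2\in s_m$ and $\psi_2\notin s_m$ imply $\psi_1\U\psi_2\in s_{m+1}$'', and then---crucially---the generalized acceptance condition: the set $\setdef{s}{\neg(\psi_1\U\psi_2)\in s \text{ or } \psi_2\in s}$ being visited infinitely often rules out the path on which $\psi_1\U\psi_2$ persists forever with $\psi_2$ never holding, so by minimizing over the first later position where $\psi_2\in s_k$ one recovers $\Gamma[m..]\models\psi_1\U\psi_2$ via the hypothesis. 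For $\psi_1\R\psi_2$ no accepting set is needed: the maximal‑consistency rule $\psi_1\R\psi_2\in K\Rightarrow\psi_2\in K$ together with the persistence rule ``$\psi_1\R\psi_2\in s_m$ and $\neg\psi_1\in s_m$ imply $\psi_1\R\psi_2\in s_{m+1}$'' forces $\psi_2$ to hold at every position up to and including the first position where $\psi_1$ holds, which is exactly $\R$'s semantics.

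With the invariant in hand, the two directions are short. For ($\Leftarrow$): an accepting path over $\zip(\Gamma)$ takes its first transition from $\iota_\psi$ into some $s_1$ with $\psi\in s_1$ (by the initial‑state rule), so the invariant at $m=1$ gives $\Gamma\models\psi$. For ($\Rightarrow$): assuming $\Gamma\models\psi$, define $s_m=\setdef{\chi\in\cl(\psi)}{\Gamma[m..]\models\chi}$; then each $s_m$ is a maximal consistent set, the initial transition and every $(s_m,w[m+1],s_{m+1})$ satisfy all transition constraints (the $\Atoms$/$\Compounds$ constraints because the members of $s_m$ are precisely the formulas true at $\Gamma[m..]$, and the $\X$/$\U$/$\R$ propagation rules because the truth of those formulas at $\Gamma[m..]$ determines membership in $s_{m+1}$), and for each until formula $\psi_1\U\psi_2$ and each $m$ either the formula is false at $\Gamma[m..]$, so $\neg(\psi_1\U\psi_2)\in s_m$, or it is true and $\psi_2$ holds at some later position $k$, so $s_k$ lies in the corresponding accepting set---hence that set is hit infinitely often. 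Thus $\iota_\psi\, s_1\, s_2\cdots$ is an accepting path over $\zip(\Gamma)$. The main obstacle in the whole argument is the liveness half of the $\U$ case---isolating exactly how the generalized {\Buchi} acceptance condition discharges until‑eventualities---together with the bookkeeping needed to make the one‑literal focus formulas of NNF (condition~(iii)) mesh cleanly with the closure and transition rules; the rest is routine structural induction.
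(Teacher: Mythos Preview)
Your proposal is correct and follows essentially the same approach as the paper: the paper isolates the invariant you state as a separate lemma (for any accepting path $\iota_\psi s_1 s_2\cdots$ over $\zip(\Gamma)$ and any $\psi'\in\cl(\psi)$, $\psi'\in s_i$ iff $\Gamma[i..]\models\psi'$), proves it by the same structural induction, and then derives both directions exactly as you describe. Your write-up is in fact more explicit than the paper's about the $\wedge$ and $\R$ cases and about why each accepting set is hit infinitely often in the $(\Rightarrow)$ direction, but the underlying argument is identical.
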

\begin{proof}
($\Leftarrow$)
By the construction of $A_{\psi}$, the states with a transition from $\iota_{\psi}$ contain $\psi$. Hence by Lemma~\ref{lem:autoStrings} below, for all the strings $w$ such that $w=\zip(\Gamma)$ in $\mathcal{L}(A_{\psi})$, it holds that $\Gamma\models\psi$.
\par
($\Rightarrow$) 
Let $s_i=\{\psi'\in\cl(\psi)\mid \Gamma[i..]\models\psi'\}$ for all $i\in\mathbb{N}$. Then by the definition, $s_i\in\ms(\psi)$. We show that $\iota_{\psi}s_1s_2\dots$ is an accepting path in $A_{\psi}$. By $\Gamma\models\psi$ we have $\psi\in s_1$. By the construction of $A_{\psi}$, $(i_{\psi},\alpha_1,s_1)\in\Delta_{\psi}$ where $\alpha_1=\zip(\Gamma)[1]$. The construction of the path inductively follows the construction of $A_{\psi}$, which respects the semantics of HyperLTL.
\end{proof}

\begin{lemma}
\label{lem:autoStrings}
Let $\iota_{\psi}s_1\dots$ be an accepting path in $A_{\psi}$ over the string $w=\alpha_1\alpha_2\dots$. Let $\Gamma=\unzip(w)$. Then for all $i\geq 0$, it holds that $\psi'\in s_i$ iff $\Gamma[i..]\models\psi'$.
\end{lemma}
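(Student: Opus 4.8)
The plan is to prove the lemma by structural induction on the subformula $\psi'\in\cl(\psi)$, establishing the biconditional simultaneously for every state $s_i$ occurring on the fixed accepting path $\iota_\psi s_1 s_2\cdots$ over $w=\alpha_1\alpha_2\cdots$. Before starting I would record the ``plumbing'' identities linking the automaton-level and semantics-level objects: since $\Gamma=\unzip(w)$ the $i$-th letter is exactly the tuple of $i$-th components, so $\alpha_i=\Gamma[i]$, hence $B(\alpha_i)=B(\Gamma[i])=B(\Gamma[i..][1])$ and $\proj_j(\alpha_i)=\proj_j(\Gamma[i])$; and $\Gamma[i..][1]=\Gamma[i]$ while $\Gamma[i..][2..]=\Gamma[(i+1)..]$. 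I would also exploit that each $s_i$ is a maximal consistent set with respect to $\cl(\psi)$ and is therefore complemented inside $\cl(\psi)$ (for a non-focus $\chi$, exactly one of $\chi$, $\neg\chi$ lies in $s_i$; for a focus formula, exactly one of $\<\ldots,p,\ldots\>$, $\<\ldots,\neg p,\ldots\>$). Consequently each case of the induction only needs one implication argued directly, the other following from maximal consistency together with the induction hypothesis.

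For the base cases I would treat compound propositions $p\in\Compounds$ (subsuming atoms), their negations, and the NNF-permitted focus formulas $\<\top,\ldots,p,\ldots,\top\>$ and $\<\top,\ldots,\neg p,\ldots,\top\>$. For $p\in\Compounds$, the transition $(s_{i-1},\alpha_i,s_i)\in\Delta_\psi$ forces $p\in s_i\Rightarrow p\in B(\alpha_i)$ and $\neg p\in s_i\Rightarrow p\notin B(\alpha_i)$; combined with $B(\alpha_i)=B(\Gamma[i..][1])$ and clause~\ref{validity:p} of the semantics this gives $p\in s_i$ iff $\Gamma[i..]\models p$. The focus case is the same argument through the $\proj_j$ clause of the transition relation and clause~\ref{validity:focus}, using that $B_1$ is the identity so the length-one projection just reads off $\proj_j(\Gamma[i])$. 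The Boolean cases $\psi_1\wedge\psi_2$, $\psi_1\vee\psi_2$, and $\neg\chi$ (for non-focus $\chi$) follow from the matching maximal-consistency conditions and the induction hypothesis on the strictly smaller subformulas, via clauses~\ref{validity:not} and~\ref{validity:or}. For $\X\chi$ I would use the transition rule ``$\X\chi\in s_i$ implies $\chi\in s_{i+1}$'', the induction hypothesis for $\chi$ at index $i+1$, the identity $\Gamma[i..][2..]=\Gamma[(i+1)..]$, and clause~\ref{validity:X}, handling $\neg\X\chi$ by maximal consistency.

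The genuinely nontrivial case, and the one I expect to be the main obstacle, is $\psi_1\U\psi_2$ (with $\R$ dual). Locally the transition rule only says that $\psi_1\U\psi_2\in s_i$ and $\psi_2\notin s_i$ imply $\psi_1\U\psi_2\in s_{i+1}$, while maximal consistency gives $\psi_1\in s_i$ or $\psi_2\in s_i$; iterating from index $i$ produces either a witness index $k\geq i$ with $\psi_2\in s_k$ and $\psi_1\in s_j$ for $i\leq j<k$ (whence the induction hypothesis and clause~\ref{validity:U} yield $\Gamma[i..]\models\psi_1\U\psi_2$), or an infinite tail on which $\psi_1\U\psi_2$ persists but $\psi_2$ never appears. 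To exclude the second alternative I must invoke the acceptance condition: the accepting set attached to $\psi_1\U\psi_2$ is $\setdef{s}{\neg(\psi_1\U\psi_2)\in s\text{ or }\psi_2\in s}$, and on such a tail no state lies in it (maximal consistency forbids $\neg(\psi_1\U\psi_2)$ when $\psi_1\U\psi_2$ is present, and $\psi_2$ is absent by assumption), contradicting that the path is accepting. This is exactly where the hypothesis ``accepting path'' rather than merely a path is used and where the argument leaves a purely local induction; stating the interplay of the transition rules, the complementation properties of maximal consistent sets, and the fairness sets precisely enough to make this airtight is the delicate part.

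Finally, the converse for $\psi_1\U\psi_2$ (from $\psi_1\U\psi_2\notin s_i$, i.e.\ $\neg(\psi_1\U\psi_2)\in s_i$, to $\Gamma[i..]\not\models\psi_1\U\psi_2$) and both directions for $\R$ I would obtain together, using the $\R$ transition rule, the duality $\neg(\psi_1\U\psi_2)\equiv(\neg\psi_1)\R(\neg\psi_2)$, and maximal consistency, which closes the induction and hence the lemma.
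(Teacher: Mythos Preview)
Your proposal is correct and follows essentially the same approach as the paper: structural induction on $\psi'$, with the base cases handled via the transition relation's constraints on atoms/compounds/focus formulas, the Boolean and $\X$ cases via maximal consistency plus the induction hypothesis, and the $\U$ case via the interplay of the transition rule, maximal consistency, and the generalized {\Buchi} acceptance set for that until-formula. You are in fact somewhat more thorough than the paper---you spell out the $\wedge$ and $\R$ cases and the role of acceptance explicitly, whereas the paper omits $\wedge$ and $\R$ and dispatches the $\Leftarrow$ direction of $\U$ with ``Similar to $\Rightarrow$''---but the skeleton is the same.
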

\begin{proof}
The proof proceeds by induction on the structure of $\psi'$:
\smallskip

\noindent\textbf{Base cases:}
\begin{enumerate}
	\item $\psi'=p$ where $p\in\Compounds$
	
	($\Rightarrow$) Assume that $p\in s_i$. By the construction of $A_{\psi}$, if $p\in s_i$ then $p\in B_n(\alpha_i)$ or equivalently $p\in B_n(\Gamma[i])$. By the semantics of HyperLTL, we have $\Gamma[i..]\models p$.
	
	($\Leftarrow$) Assume that $\Gamma[i..]\models p$.
	Then $p\in B_n(\Gamma[i])$, which is equivalent to $p\in B_n(\alpha_i)$. By the fact that states are maximal consistent sets, one of $p$ or $\neg p$ must appear in $s_i$.
	By the construction of $A_{\psi}$ and the fact that $p\in B_n(\alpha_i)$, we have $p\in s_i$.
	
	\item $\psi'=\langle\psi_1,\dots,\psi_n\rangle$
	
	($\Rightarrow$) Assume that $\psi'\in s_i$.
	By the construction of $A_{\psi}$, for all $1\leq r\leq n$, if $\psi_r=p$ we have $p\in \proj_r(\alpha_i)$ or equivalently $p\in\proj_r(\Gamma[i])$.
	Then by the semantics of HyperLTL, $\proj_r(\Gamma[i])\models\psi_r$.
	If $\psi_r=\neg p$, then $p\not\in\proj_r(\Gamma[i])$, which again concludes $\proj_r(\Gamma[i])\models\psi_r$.
	Therefore we have $\Gamma[i..]\models\psi'$.
	
	($\Leftarrow$) Assume that $\Gamma[i..]\models\psi'$.
	Then for all $1\leq r\leq n$, $\proj_r(\Gamma[i])\models\psi_r$.
	If $\psi_r=p$ we have $p\in\proj_r(\Gamma[i])$, which is equivalent to $p\in\proj_r(\alpha_i)$.
	If $\psi_r=\neg p$ then $p\not\in\proj_r(\Gamma[i])$, which is $p\not\in\proj_r(\alpha_i)$.
	By the fact that states are maximal consistent sets, one of $\langle\psi_1,\dots,\psi_n\rangle$ or a member of $\overline{\<\psi_1,\dots,\psi_n\>}$ must appear in $s_i$.
	By the semantics of HyperLTL and the construction of $A_{\psi}$, only $\langle\psi_1,\dots,\psi_n\rangle$ can be in $s_i$, that means $\psi'\in s_i$.
\end{enumerate}

\noindent\textbf{Inductive cases:}
\begin{enumerate}
	\item $\psi'=\neg\psi''$
	
	($\Rightarrow$) Assume that $\neg\psi''\in s_i$, Then $\psi''\not\in s_i$. By induction hypothesis, $\Gamma[i..]\not\models\psi''$, or equivalently, $\Gamma[i..]\models\neg\psi''$. Hence, $\Gamma[i..]\models\psi'$.

	($\Leftarrow$) Similar to $\Rightarrow$.
	
	\item $\psi'=\psi_1\vee\psi_2$
	
	($\Rightarrow$) By the construction of $A_{\psi}$, if $\psi_1\vee\psi_2\in s_i$ then $\psi_1\in s_i$ or $\psi_2\in s_i$. By induction hypothesis, $\Gamma[i..]\models\psi_1$ or $\Gamma[i..]\models\psi_2$, which concludes $\Gamma[i..]\models\psi_1\vee\psi_2$.
	
	($\Leftarrow$) Similar to $\Rightarrow$.
	
	\item $\psi'=\X\psi''$ 
	
	($\Rightarrow$) Assume that $\psi'\in s_i$. By the  construction of $A_{\psi}$, $\psi''\in s_{i+1}$. By induction hypothesis, $\Gamma[i+1..]\models\psi''$, which concludes $\Gamma[i..]\models\X\psi''$.
	
	($\Leftarrow$) Similar to $\Rightarrow$ and the fact that always one of $\X\psi''$ or $\neg\X\psi''$ appears in a state.
	
	\item $\psi'=\psi_1\U\psi_2$
	
	($\Rightarrow$) Assume that $\psi_1\U\psi_2\in s_i$. By the construction of $A_{\psi}$ and the fact that the path is accepting, there is some $j\geq i$ such that $\psi_2\in s_j$. Let $j$ be the smallest index. By induction hypothesis, $\Gamma[j..]\models\psi_2$. By the construction of $A_{\psi}$, for all $i\leq k < j$, $\psi_1\in s_k$. Therefore by induction hypothesis, $\Gamma[k..]\models\psi_1$. which concludes $\Gamma[i..]\models\psi_1\U\psi_2$.

	($\Leftarrow$) Similar to $\Rightarrow$.
	
	\end{enumerate}

\end{proof}

\end{document}